\renewcommand{\theequation}{\thesection\arabic{equation}}
\newcommand{\blue}{\color{blue}}
\newcommand{\btheta}{\mbox{\boldmath $\theta$}}
\newcommand{\bm}{\mbox{\boldmath $m$}}
\newcommand{\bu}{\mbox{\bf u}}
\newcommand{\bv}{\mbox{\bf v}}
\newcommand{\bw}{\mbox{\bf w}}
\newcommand{\bt}{\mbox{\bf t}}
\newcommand{\bb}{\mbox{\bf b}}
\newcommand{\bB}{\mbox{\bf B}}
\newcommand{\bx}{\mbox{\bf x}}
\newcommand{\cR}{\mbox{$\mathcal{R}$}}
\newcommand{\bbE}{\mbox{$\mathbb{E}$}}
\newcommand{\ind}{\mbox{$\mathbbm{1}$}}
\newcommand{\GG}{\mathbb{G}}
\newcommand\raiseT[2]{%
  \setbox0\hbox{$#1{#2}$}\raise\dp0\box0}
\newcommand{\bea}{\begin{eqnarray}}
\newcommand{\eea}{\end{eqnarray}}
\newcommand{\ba}{\begin{eqnarray*}}
\newcommand{\ea}{\end{eqnarray*}}
\newcommand{\be}{\begin{equation}}
\newcommand{\ee}{\end{equation}}
\newcommand{\bi}{\begin{itemize}}
\newcommand{\ei}{\end{itemize}}
\newcommand{\var}{\mbox{\sc var}}
\newcommand{\cov}{\mbox{\sc cov}}
\newcommand{\iid}{\mbox{i.i.d.}}
\newcommand{\EM}{\mbox{\sc em}}
\newcommand{\cd}{\overset{d}{\longrightarrow}}
\newtheorem{theorem}{\bf Theorem}
\newtheorem{lemma}{\bf Lemma}
\newcommand{\vs}{\vspace{1ex}}
\begin{document}


\renewcommand{\baselinestretch}{1.6}

\markright{ \hbox{\footnotesize\rm 
}\hfill\\[-13pt]
\hbox{\footnotesize\rm
}\hfill }

\markboth{\hfill{\footnotesize\rm JIAHUA CHEN, PENGFEI LI AND GUANFU LIU} \hfill}
{\hfill {\footnotesize\rm HOMOGENEITY TESTING UNDER FINITE LOCATION-SCALE MIXTURES} \hfill}

\renewcommand{\thefootnote}{}
$\ $\par


\fontsize{12}{14pt plus.8pt minus .6pt}\selectfont \vspace{0.8pc}
\centerline{\large\bf Homogeneity testing under finite location-scale mixtures}
\vspace{.4cm} 
\centerline{Jiahua Chen$^{1}$, Pengfei Li$^{2}$, and Guanfu Liu$^{3}$} \vspace{.4cm} 
\begin{center}
{\it
$^1$Research Institute of
Big Data, Yunnan University, Kunming, Yunnan 650091, China\\
Department of Statistics, University of British Columbia, Vancouver, BC, Canada V6T 1Z2\\
E-mail: jhchen@stat.ubc.ca}
\end{center}

\begin{center}
{\it $^2$Department of Statistics and Actuarial Science, University of Waterloo, Waterloo, ON, Canada N2L 3G1\\
E-mail: pengfei.li@uwaterloo.ca} 
\end{center}

\begin{center}
{\it $^3$School of Statistics and Information, Shanghai University of International Business and Economics, Shanghai 201620, China\\
E-mail: liuguanfu07@163.com} 
\end{center}
\fontsize{9}{11.5pt plus.8pt minus.6pt}\selectfont


\begin{quotation}
\noindent {\it Abstract:}
The testing problem for the order of finite mixture models has
a long history and remains an active research topic.
Since \cite{Ghosh1985} revealed the hard-to-manage asymptotic
properties of the likelihood ratio test, there has been marked progress.
The most successful attempts include the modified likelihood
ratio test and the EM-test, which lead to neat solutions
for finite mixtures of univariate normal distributions, finite mixtures
of single-parameter distributions, and several mixture-like models.
{The problem remains challenging, and there is still no
generic solution for location-scale mixtures.}
In this paper, we provide an
EM-test solution for homogeneity for finite mixtures
of location-scale family distributions.
This EM-test has nonstandard limiting distributions, but we 
are able to find the critical values numerically.
We use computer experiments to obtain appropriate
values for the tuning parameters.
A simulation study shows that the fine-tuned EM-test has
close to nominal type I errors and very good power properties.
Two application examples are included to
demonstrate the performance of the EM-test.

\vspace{9pt}
\noindent {\it Key words and phrases:}
Computer experiments,  EM-test,  Limiting distribution,
Location-scale family, Mixture models, Tuning parameter.

\par
\end{quotation}\par

\def\thefigure{\arabic{figure}}
\def\thetable{\arabic{table}}

\renewcommand{\theequation}{\thesection.\arabic{equation}}

\fontsize{12}{14pt plus.8pt minus .6pt}\selectfont

\setcounter{equation}{0} 
\section{Introduction}

Let $\{f(x; \btheta): \btheta \in \Theta\}$ be a parametric distribution family.
A finite mixture model expands this family to include all convex combinations:

\[
f(x; G) = \sum_{j=1}^m \alpha_j f(x; \btheta_j),
\]
with the mixing distribution {$G(\btheta) $} given by
\[
G(\btheta) = \sum_{j=1}^m \alpha_j \ind (\btheta_j \leq \btheta).
\]
Here $\ind(\cdot)$ stands for the indicator function.
When $\btheta$ is a vector, the inequality is interpreted component-wise.
We may also write $G = \sum_{j=1}^m \alpha_j \{ \btheta_j\}$
and {regard it as the set of  all the parameters involved:
$\{(\alpha_j, \btheta_j): j=1, 2, \ldots, m\}$.
The subpopulation parameter space $\Theta$ is generally a
subset of an Euclidean space $\cR^d$ of dimension $d$.

\lhead[\footnotesize\thepage\fancyplain{}\leftmark]{}\rhead[]{\fancyplain{}\rightmark\footnotesize\thepage} 

In this paper, we consider the case where  $\btheta = (\mu, \sigma)^\tau$
and there exists a probability density function on $\cR$ with respect
to the Lebesgue measure $f_0(x)$ such that
\[
f(x; \btheta)
= \frac{1}{\sigma} f_0  \left (\frac{x-\mu}{\sigma} \right ).
\]
The parameter space for $\btheta$ is $\Theta = \cR \times \cR^+$,
with $\cR^+$ being all positive real numbers.
In other words, the subpopulation distributions are
members of a location-scale distribution family.
Location-scale mixtures are widely used in applications.
\cite{Naya2006} and \cite{Salimans2017}
applied mixtures of logistic distributions to
thermogravimetric analysis and imaging data, respectively.
{Mixtures} of Weibull distributions or exponential distributions
are used for failure time,  lifetime, wind speed,
forestry data, and so on.
For instance, \cite{Lawless2003}  applied a mixture of Weibull distributions
to the number of cycles before failure for a group of 60 electrical appliances.
\cite{Dwidayati2013}  used a mixture of Weibull distributions
for the lifetimes of breast cancer patients from 
medical records.
\cite{Zhang2001} applied a mixture of Weibull distributions to
the diameter distributions of rotated-sigmoid and uneven-aged stands.
See \cite{Castet2009}, \cite{Qin2012}, and \cite{Kao1959}
for more examples.

Suppose we have a set of independent and identically
distributed ({\iid}) observations,  $x_1, \ldots, x_n$,
from a two-component mixture
\begin{equation}
\label{model}
f(x; G)
=
\alpha_1 f(x; \btheta_1) + \alpha_2 f (x; \btheta_2).
\end{equation}
An elementary yet fundamental problem is the test of homogeneity:
\begin{equation*}
H_{0}:  \alpha_1 \alpha_2 (\btheta_1 - \btheta_2) = 0.
\label{testing.problem}
\end{equation*}

Research into homogeneity testing has a long history.
The earliest examples
include \cite{Hartigan1985} and \cite{Ghosh1985},
who revealed the nonstandard
asymptotic behavior of the likelihood ratio test.
A famous nonstandard approach is the C($\alpha$) test of
\cite{Neyman1966}. \cite{Bickel1993}, \cite{Chernoff1995}, 
\citet{Dacunha1999},  \cite{Chen2001}, and  \cite{LiuShao2003}
all contributed to the understanding of the classical likelihood ratio test
in the context of the finite mixture model.
Two waves of further development
led to the effective data analysis procedures
summarized in {the {\tt R} package {\tt MixtureInf}}.
One is the modified likelihood ratio test of \cite{Chen1998},
\citeauthor{Chen2001A} (\citeyear{Chen2001A}, \citeyear{Chen2004}),
and \citeauthor{Charnigo2004} (\citeyear{Charnigo2004}, \citeyear{Charnigo2010}).
Another is the EM-test; see \cite{Li2009}, \cite{Chen2009},
and \cite{Niu2011}.
{
Because of the additional nonregularities of location-scale mixtures 
in the form of the unbounded likelihood, 
the existing results are not applicable to general location-scale mixtures.}

We take up this task in this paper.
We show that the EM-test approach
remains effective for location-scale mixtures.
In Section \ref{main}, we develop an EM-test for homogeneity
tailored for location-scale mixtures, derive its limiting distribution,
and obtain its specific form in three location-scale mixtures.
In Section \ref{tuning}, we use an experimental approach to determine
a set of optimal tuning parameter values.
In Section \ref{simulation}, we show via simulation that the
proposed EM-test has close to nominal type I errors and
good power properties. In Section \ref{DataEx}, we give
two real-data examples. The paper ends with an Appendix
containing the technical derivations.

\setcounter{equation}{0} 
\section{Main results}\label{main}

The location-scale mixture is nonregular in several ways.
Given a set of {\iid} observations $x_1, \ldots, x_n$, the log-likelihood
function is given by
\begin{equation*}
\ell_n(G) =
\sum_{i=1}^{n}\log f(x_i; G).
\label{log.like}
\end{equation*}
When $G$ has only two support points, we also write
it as $\ell_n(\alpha_1, \alpha_2, \btheta_1, \btheta_2)$.
Let $\btheta_1 = (x_1, \sigma_1)^\tau$, $\btheta_2 = (0, 1)^\tau$,
and $\alpha_1 =\alpha_2 =0.5$ in $G$.
We have $f(x_1; G) \to \infty$ as $\sigma_1\to 0$ while
$f(x_i; G)$ has a finite lower bound for all {$i\neq1$}.
Hence, the log-likelihood $\ell_n(G)$ is unbounded.
This well-known undesirable property leads to the inconsistent
maximum likelihood estimation (MLE) of $G$ for location-scale mixtures.
To save the likelihood-based inference,
one may counter this aberration with a penalty function
in $\sigma_1$ and $\sigma_2$ similarly to \cite{Chen2008}
or a constraint as in \cite{Tanaka2009}.
As an alternative, we use
the penalized log-likelihood function
\bea
\tilde{\ell}_n(G)
&=&
\ell_n(G) +p(\alpha_1)+p(\alpha_2)+ p_n(\sigma_1) + p_n(\sigma_2)
\nonumber \\
&=&
\ell_n(G) +p(\alpha_1, \alpha_2)+ p_n(\sigma_1, \sigma_2),
\label{pen.log.like}
\eea
for some choice of $p(\cdot)$ and $p_n(\cdot)$.
Here, we have equated $p(\alpha_1)+p(\alpha_2)$ and $p(\alpha_1, \alpha_2)$
and so on for notational convenience.
We develop an effective EM-test based on
\eqref{pen.log.like} {in the next subsection}.

\vspace{1em}

\noindent{\bf 2.1~ EM-test}

{We first choose a set $\{\pi_1, \dots, \pi_J\} \in (0, 0.5]$,
for example $\{0.1, 0.3, 0.5\}$, as the initial values for $\alpha_1$ and
a positive integer $K$, for example $K=3$.}
{We then }define an EM-test statistic through the
following iteration steps:

\vs
\noindent
Step 1. Let $k=0$.
For a given $j$, let $\alpha_1^{(0)}=\pi_j$ and $\alpha_2^{(0)} = 1- \pi_j$.
Compute
$$
(\btheta_1^{(0)}, \btheta_2^{(0)})
=
\arg \max_{{\footnotesize\btheta_1, \btheta_2}}
\tilde{\ell}_n (\alpha_1^{(0)},\alpha_2^{(0)}, \btheta_1, \btheta_2).
$$

\vs
\noindent
Step 2. For $i=1, \dots, n$ and the current $k$, use an
{\bf E-step} to compute
\[
w^{(k)}_{i}
=
\dfrac{\alpha_1^{(k)}f(x_{i}; \btheta_1^{(k)})}
{\alpha_1^{(k)}f(x_{i}; \btheta_1^{(k)})
+
\alpha_2^{(k)}f(x_{i}; \btheta_2^{(k)})}.
\]
Update the parameters by an {\bf M-step} such that
\begin{eqnarray*}
(\alpha_1, \alpha_2)^{(k+1)}
 =
 \arg\max_{\alpha_1, \alpha_2}
 \left \{
 \sum_{i=1}^{n} w^{(k)}_{i} \log \alpha_1
 +
 \Big( n- \sum_{i=1}^{n}w^{(k)}_{i}\Big) \log \alpha_2
 +
 p(\alpha_1, \alpha_2)\right\}
\end{eqnarray*}
and
\begin{align*}
\btheta_1^{(k+1)}
&=
\arg\max_{\footnotesize\btheta}
\left
\{\sum\limits_{i=1}^{n}
w^{(k)}_{i}\log f(x_{i}; \btheta)+p_n(\sigma_1)
\right \},
\\
\btheta_2^{(k+1)}
&=
\arg\max_{\footnotesize\btheta}
\left \{\sum\limits_{i=1}^{n}
(1-w^{(k)}_{i}) \log f(x_{i}; \btheta)+p_n(\sigma_2) \right \}.
\end{align*}

Repeat the E-step and M-step $K-1$ times. Return
{$\left(\alpha_1^{(K)}, \alpha_2^{(K)}, \btheta^{(K)}_{1}, \btheta^{(K)}_{2}\right)$}.

\vs
\noindent
Step 3. Define
\[
M^{(K)}_n(\pi_j)
=
2\left\{\tilde{\ell}_n(\alpha_1^{(K)}, \alpha_2^{(K)}, \btheta^{(K)}_{1}, \btheta^{(K)}_{2})
-
\tilde{\ell}_n(0.5, 0.5,\hat \btheta_0, \hat \btheta_0)
\right\},
\]
where
$
\hat \btheta_0
=
\arg\max_{\footnotesize\btheta} \tilde{\ell}_n(0.5, 0.5, \btheta, \btheta)
$.

\vs
\noindent
Step 4.
Repeat Steps 1 to 3 for each $j=1, \ldots, J$.
Define the EM-test statistic to be
\be
\label{EMstat}
\EM^{(K)}_{n}=\max\left\{M_n^{(K)}(\pi_j): j = 1, \dots, J\right\}.
\ee

The null hypothesis $H_0$ is rejected if $\EM^{(K)}_{n}$
exceeds some critical value determined by its limiting distribution,
derived below.

\vspace{1em}
\noindent{\bf 2.2~ Asymptotic properties}

The EM-test statistic is location-scale invariant
when $p_n(\cdot)$ is invariant, and this can be achieved
{by} an appropriate choice. Therefore, without loss of generality,
we assume that under $H_0$, $\mu = 0$
and $\sigma= 1$. In other words, we take $f_0(x)$ as the
true distribution of $x_1, \ldots, x_n$ under the null hypothesis.

Two key quantities are involved in the asymptotic study:
the gradient vector and the Hessian matrix of
$f(x; \btheta)$ evaluated at $\btheta_0 = (0, 1)^\tau$.
They make up a vector of length 5, two partial derivatives
and three second-order partial derivatives (divided by 2)
with respect to $\btheta = (\mu, \sigma)$,
{
$$
\bb_{1i}
=  \left(\frac{\partial f(x_i; \btheta_0)/\partial \mu}{f(x_i; \btheta_0)},\frac{\partial f(x_i; \btheta_0)/\partial \sigma}{f(x_i; \btheta_0)}\right)^\tau
$$
and
$$
\bb_{2i}
=
\left(
\frac{\partial^2 f(x_i;\btheta_0)/\partial\mu^2}
{2f(x_i;\btheta_0)},
\frac{\partial^2f(x_i;\btheta_0)/\partial\mu\partial\sigma}{2f(x_i;\btheta_0)},
\frac{\partial^2 f(x_i;\btheta_0)/\partial\sigma^2}{2f(x_i;\btheta_0)}
\right)^\tau.
$$
}
Let $\bb_i = (\bb_{1i}^\tau, \bb_{2i}^\tau)^\tau$.
When $f_0(x)$ is sufficiently well-behaved,
$\bbE( \bb_i) = 0$, and well-defined $\bB = \var(\bb_i)$.
Let $\bB_{11}$, $\bB_{12}$, and $\bB_{22}$ be submatrices
of $\bB$ matching the partition $\bb_{1i}$ and $\bb_{2i}$
and let
${\tilde \bb}_{2i}=\bb_{2i}-{\bB_{21} \bB_{11}^{-1}}\bb_{1i}$.
We have
\(
\var({\tilde \bb}_{2i})
=
{\tilde \bB}_{22}
=\bB_{22}-\bB_{21}\bB^{-1}_{11}\bB_{12}
\)
and $\cov(\bb_{1i}, \tilde{\bb}_{2i}) = 0$.

Here is the main result, with the convention that
when $\bv = (v_1, v_2)^\tau$,
$$
\label{convention}
{\bv^2} = (v_1^2, 2v_1v_2, v_2^2)^\tau.
$$

\begin{theorem}
\label{thm2}
Suppose we have a random sample from model \eqref{model}
and the EM-test statistic is defined by \eqref{EMstat}
with the penalized likelihood function \eqref{pen.log.like}.
Assume Conditions B1--B3 on $f_0(x)$ and C1--C4  on
$p(\cdot)$, $p_n(\cdot)$; {these conditions are given in the Appendix}.
Let $\pi_1=0.5$ and $\pi_2, \ldots, \pi_J \in (0, 0.5)$.
Under the null hypothesis, for any fixed finite $K$,
as $n \rightarrow \infty$:\\
(i)
If 
$\bB_{11}$ has full rank
with $\tilde{\bB}_{22} (\bv^2)^\tau \neq 0$ for any $\bv \neq 0$
then
\be
\label{limit.distr1}
\EM^{(K)}_n
\cd
\sup_{\bv}
\left\{
2 (\bv^2)^\tau \bw - (\bv^2)^\tau \tilde \bB_{22} (\bv^2)
\right\},
\ee
where ${\bw}=(w_1, w_2, w_3)^\tau$ is a multivariate normal random vector with
mean zero and variance-covariance matrix ${\tilde \bB_{22}}$.

\noindent
(ii)
If $\bB_{11}$ has full rank, 
and the only null eigenvector of $ \tilde \bB_{22}$ has the form
$(u_1, 0, u_3)^\tau$ such that $u_1 u_3 > 0$, then
$$
\label{limit.distr2}
\EM_n^{(k)} \cd \chi_2^2.
$$
\end{theorem}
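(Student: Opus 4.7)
The plan is to analyse a single initial value $\pi_j$ and then take the maximum over $j$, via a Taylor expansion of the penalized log-likelihood around $\btheta_0=(0,1)^{\tau}$. The first task is to establish rates of convergence: under Conditions B1--B3 and C1--C4, the penalty $p_n(\sigma_1)+p_n(\sigma_2)$ controls the unbounded likelihood, and the Step~1 maximizer satisfies $\btheta_j^{(0)}\to\btheta_0$ in probability with $\btheta_j^{(0)}-\btheta_0=O_p(n^{-1/4})$ for $j=1,2$. An induction on the EM iterations then shows that $\alpha_1^{(k)}-\pi_j=O_p(n^{-1/2})$ and $\btheta_j^{(k)}-\btheta_0=O_p(n^{-1/4})$ propagate through all $K$ iterates.

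Writing $\bt_j=\btheta_j-\btheta_0$ and introducing the mixture moments $\bm_1=\alpha_1\bt_1+\alpha_2\bt_2$ and $\bm_2=\alpha_1\bt_1^{2}+\alpha_2\bt_2^{2}$ in the paper's squaring convention, a second-order Taylor expansion of $\log\{f(x_i;G)/f_0(x_i)\}$ gives
\[
2\{\tilde\ell_n(G)-\tilde\ell_n(G_0)\}
\approx 2S_1^{\tau}\bm_1+2S_2^{\tau}\bm_2-n\bm_1^{\tau}\bB_{11}\bm_1-2n\bm_1^{\tau}\bB_{12}\bm_2-n\bm_2^{\tau}\bB_{22}\bm_2+o_p(1),
\]
with $S_{\ell}=\sum_i\bb_{\ell i}$. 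The baseline $\tilde\ell_n(0.5,0.5,\hat\btheta_0,\hat\btheta_0)$ contributes the standard $n^{-1}S_1^{\tau}\bB_{11}^{-1}S_1+o_p(1)$, and profiling out the essentially free $\bm_1$ using the orthogonalized score $\tilde\bb_{2i}=\bb_{2i}-\bB_{21}\bB_{11}^{-1}\bb_{1i}$ reduces the statistic to
\[
M_n^{(K)}(\pi_j)=2\tilde S_2^{\tau}\bm_2^{(K)}-n(\bm_2^{(K)})^{\tau}\tilde\bB_{22}\bm_2^{(K)}+o_p(1).
\]
The decisive identity is the variance-decomposition formula $\bm_2=\bm_1^{2}+\alpha_1\alpha_2(\bt_1-\bt_2)^{2}$: since $\bm_1^{2}=O_p(n^{-1})$ is negligible at the $n^{-1/2}$ scale while $\alpha_1\alpha_2(\bt_1-\bt_2)^{2}=O_p(n^{-1/2})$ dominates, $\bm_2^{(K)}$ is effectively confined to the rank-one cone $\{\pi_j(1-\pi_j)\bv^{2}:\bv\in\bbR^{2}\}$. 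Setting $\bv=n^{1/4}\sqrt{\pi_j(1-\pi_j)}(\bt_1^{(K)}-\bt_2^{(K)})$ and invoking the CLT $n^{-1/2}\tilde S_2\cd\bw\sim N(0,\tilde\bB_{22})$ then gives part~(i); the supremum is already attained at Step~1 (the initial joint optimisation over $\bt_1,\bt_2$ delivers the interior maximiser on the cone), and the subsequent EM updates keep the iterate in its neighbourhood.

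For part~(ii) I exploit the null eigenvector structure. Write $\bv^{2}=\lambda(u_1,0,u_3)^{\tau}+\by$ with $\by$ in the range of $\tilde\bB_{22}$. Because $\bw\sim N(0,\tilde\bB_{22})$ sits almost surely in that range and the null vector is orthogonal to it, both $2\bw^{\tau}\bv^{2}$ and $(\bv^{2})^{\tau}\tilde\bB_{22}\bv^{2}$ depend only on $\by$. A short calculation using $u_1u_3>0$ (solving a quadratic in $\lambda$ whose discriminant reduces to $a^{2}(u_1^{2}+u_3^{2})^{2}+u_1u_3^{3}b^{2}\ge 0$) shows that the map $\bv\mapsto\by$ is surjective onto the two-dimensional range of $\tilde\bB_{22}$, so the constrained supremum equals the unconstrained $\bw^{\tau}\tilde\bB_{22}^{-}\bw$, which is $\chi_2^{2}$ because $\tilde\bB_{22}$ restricted to its range has rank~2. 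The principal obstacle in the whole programme is the EM-iteration analysis: verifying that a fixed finite number of E-M iterations preserves the $O_p(n^{-1/4})$ rate and does not dislodge the iterate from the supremum achieved at Step~1. This requires delicate inductive control of the weights $w_i^{(k)}$, careful use of the $\alpha$-penalty $p(\alpha_1,\alpha_2)$ to pin $\alpha_1^{(k)}$ near $\pi_j$, and a uniform expansion of the profiled quadratic along the rank-one cone.
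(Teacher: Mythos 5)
Your outline follows the same route as the paper's proof: the quadratic expansion of $2\{\tilde\ell_n(G)-\tilde\ell_n(G_0)\}$ in the moment parameters $(\bm_1,\bm_2)$, orthogonalization via $\tilde\bb_{2i}=\bb_{2i}-\bB_{21}\bB_{11}^{-1}\bb_{1i}$, the confinement of $\bm_2$ to the cone $\{\bv^2\}$ (your identity $\bm_2=\bm_1^2+\alpha_1\alpha_2(\bt_1-\bt_2)^2$ is an algebraically equivalent form of the paper's $\bm_2-\bv^2=\bm_1\, o_p(1)$ with $\bv=\sqrt{\alpha_1/\alpha_2}\,(\btheta_1-\btheta_0)$), the sandwich between a $\pi_j$-free upper bound and a lower bound at $\pi_1=0.5$, and, for case (ii), the reduction to a quadratic map of $\bv$ onto $\cR^2$ that is surjective precisely because $u_1u_3>0$, yielding $\chi^2_2$. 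However, you have misjudged where the difficulty lies, and the step you dispose of in one sentence is a genuine gap. The claim that ``the penalty controls the unbounded likelihood'' so that the Step-1 maximizer is consistent with $\btheta_h^{(0)}-\btheta_0=O_p(n^{-1/4})$ is the nonregular heart of the location-scale problem --- the very feature that makes existing EM-test results inapplicable here --- and the paper spends four lemmas on it (and still only sketches it): one must bound, uniformly in $\mu_1$, the number of observations in a $\sigma_1^{2/3}$-neighbourhood via an empirical-process inequality, dominate $f(x;G)$ outside that neighbourhood by an auxiliary density $g(x;G)$ carrying an additive floor $\delta_0$, and run Wald/Kiefer--Wolfowitz arguments on the region $\sigma_1<\epsilon_0$. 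Conditions C2--C3 exist exactly for this step, and your proposal never invokes them; without this argument the Taylor expansion around $\btheta_0$ has no justification.

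Two further, smaller points. You name the control of the $K$ EM iterations as ``the principal obstacle,'' requiring delicate induction on the weights $w_i^{(k)}$; it is not. Since each E/M step cannot decrease the penalized likelihood, $\tilde\ell_n(G^{(k)})\ge\tilde\ell_n(G^{(1)})\ge\tilde\ell_n(G_0)+O_p(1)$, so every iterate satisfies the hypothesis of the generic rate lemma and inherits $O_p(n^{-1/4})$ automatically; only the stability of $\alpha_1^{(k)}$ near $\pi_j$ needs a separate (and known) argument. Conversely, the step you treat as automatic --- attainment of the supremum over the cone --- does require work: the upper bound holds for every $\pi_j$, but the matching lower bound is established only at $\pi_1=0.5$ by explicitly constructing a two-component $\hat G$ with $\hat\alpha_1=\hat\alpha_2=0.5$ whose moments realize the optimal $(\hat\bm_1,\hat\bv^2)$ with parameters of order $O_p(n^{-1/4})$. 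Saying that ``the initial joint optimisation delivers the interior maximiser on the cone'' skips the fact that Step 1 optimizes over $(\btheta_1,\btheta_2)$ with $\alpha_1$ held fixed, so realizability of the cone optimizer within that constrained family must be exhibited; this is also where the hypothesis $\pi_1=0.5$ earns its place in the theorem.
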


Conditions B1--B3 require $f_0(x)$ to be sufficiently smooth and to
satisfy some integration conditions. They are satisfied by most
commonly used distributions.
Conditions C1--C4 are requirements on the penalty functions.
Since the user can choose these functions,
the usefulness of the proposed EM-test is not affected.
Specific recommendations will be given later.

The conditions on the rank of $\bB$ correspond to the
strong identifiability introduced in \cite{Chen1995}.
Collinearity of the first, second-order, or even higher-order derivatives of the
component density function often leads to complex 
large-sample properties (\citealp{Ho2016}).

Because the value of $\bv^2 = (v_1^2, 2v_1v_2, v_2^2)^\tau$
is restricted to a two-dimensional manifold of $\cR^3$, the
limiting distribution in Case (i) is in general not the well-known
chi-square mixture. Nevertheless, its quantiles
are easily computed by Monte Carlo simulation.
When $\bB$ has a specific structure, as in case (ii),
the limiting distribution is particularly elegant. This is also the
case for the normal mixture model although it does not satisfy the
conditions specified in Case (ii);
{see \cite{Chen2009}. }

\vspace{1em}

\noindent{\bf 2.3~ Examples}

To illustrate the
proposed EM-test, we identify a few location-scale families satisfying
the conditions and work out their $\tilde {\bB}_{22}$ matrices.

\noindent
{\bf Logistic distribution}.
The density function of the standard logistic distribution is given by
\[
f_0(x)
=\frac{\exp(-x)}{ \{1+\exp(-x) \}^2}.
\]
To a sufficient numerical precision,
$$
\tilde \bB_{22}=
\left (
\begin{array}{ccc}
0.0063& 0  & -0.1043\\
0&   0.2062& 0\\
-0.1043 &  0  & 1.8498\\
\end{array}
\right ),
$$
which has full rank.
Hence, the logistic distribution belongs to Case (i),
and its null limiting distribution is given by \eqref{limit.distr1}
with the above $\tilde \bB_{22}$.

\noindent
{\bf Extreme-value distribution}.
The density function of the standard (type I) extreme-value distribution
is
\[
f_0(x)
=
\exp \{ x -\exp(x)\}.
\]
We find, to a sufficient numerical precision, that
$$
\tilde \bB_{22}=
\left(
\begin{array}{ccc}
0.3921& 0.9697& 1.1256\\
0.9697& 2.4928& 3.4362\\
1.1256& 3.4362& 7.8242\\
\end{array}
\right),
$$
which has full rank.
The extreme-value distribution also belongs to Case (i),
and its null limiting distribution is given by \eqref{limit.distr1}
with the above $\tilde \bB_{22}$.


\noindent
{\bf Student-t distribution}.
The density function of the standard student-t distribution with $v$
degrees of freedom is
$$
f_0(x)
=
\frac{\Gamma ( (\nu+1)/2)}{\Gamma (\nu/2 )\sqrt{\pi \nu}}
 \{1+  x^2/\nu\}^{- (\nu+1)/2 }.
$$
We consider the situation where $\nu$ is known.
We find its $\tilde{\bB}_{22}$ has rank 2 and {its null eigenvector $\bu=(u_1,u_2,u_3)^\tau$}
has $u_2 = 0$ while $u_1u_3 > 0$.
Thus, the EM-test statistic has a $\chi^2_2$ limiting
distribution under the null hypothesis of homogeneity.


\setcounter{equation}{0} 
\section{Experiments for tuning parameters}
\label{tuning}

To implement the EM-test, the user must select
penalty functions and tuning parameter values.
These choices affect the computational simplicity and
precision of the asymptotic distribution as well as the power properties
of the EM-test.
Similarly to \cite{Chen2009}, we suggest setting
$\{\pi_1, \ldots, \pi_J\} = \{0.1,0.3,0.5\}$ with $J=3$ and
$K=3$. We recommend
\be \label{pn.sigma}
\begin{multlined}
p(\alpha)
=
\log(1-|1 - 2\alpha|)
\\
p_n(\sigma)
=
-a_n\{\hat\sigma^2/\sigma^2+\log(\sigma^2/\hat\sigma^2)\}
\end{multlined}
\ee
for some $a_n>0$
with $\hat \sigma$ being the maximum likelihood estimator of
$\sigma$ under $H_0$.
This choice is equivalent to placing
a Gamma prior distribution on $\sigma^{-2}$.
The inclusion of $\hat \sigma^2$ makes the EM-test
location-scale invariant.
The specific functional forms enable
easy numerical computation.

The choice of $a_n$ influences the
type-I errors of the proposed test.
We take advantage of this property and
use experiments to recommend a value of $a_n$
to achieve accurate test sizes.


The experiment started with pilot trials
on many values of $a_n$ and the sample size $n$.
We found that when $a_n \leq 0.2$, the EM-test has
markedly inflated type-I errors compared to the nominal levels.
We then decided to run a $4 \times 4$ factorial design
for $a_n \in \{ 0.3, 0.4, 0.5, 0.6\}$ and
$n \in \{50, 100, 300, 500\}$
{and} to apply the data from the four location-scale mixtures as follows.

For each location-scale family,
we used the Monte Carlo method to obtain precise upper
quantiles for the limiting distributions of
$\EM^{(K)}_{n}$ in \eqref{EMstat}.
We used them to determine the rejection regions.
For each combination of $a_n$ and $n$
in the factorial design,
we generated 10000 random samples of size $n$
from $f_0(x)$ to obtain $\EM_n^{(3)}$ values
and therefore the rejection rate $\hat q$ at level $q$.
The discrepancy between $\hat q$ and $q$ is calculated as
\be
\label{discrep}
y=\log\{\hat q/(1-\hat q)\} - \log\{q/(1-q)\}.
\ee
The values for $q=0.05$ are given in
Table~\ref{err.experiment}
for the four location-scale families investigated.
We included only $t_{10}$ {for the student  $t$ distribution} for reality considerations.

\begin{table}[!htt]
\renewcommand{\arraystretch}{0.45}
\tabcolsep=1.5mm
\caption{\label{err.experiment}
Discrepancy between $\hat q$ and $q$ in terms of \eqref{discrep}
for four location-scale distribution families at $q=0.05$}
\begin{center}
\begin{tabular}{ccrrrr}
\hline
$a_n$&$n$&Logistic&Extreme&Student-t&Normal\\
\hline
0.3&  50&-0.1200&0.0270&-0.0234&-0.1778\\
0.4&  50&-0.2761&-0.1129&-0.2207&-0.4395\\
0.5&  50&-0.4115&-0.2897&-0.3664&-0.5845\\
0.6&  50&-0.5845&-0.3993&-0.5525&-0.7525\\
0.3&100&0.0413&0.1253&0.0146&-0.0106\\
0.4&100&-0.0561&0.0188&-0.1083&-0.1557\\
0.5&100&-0.1485&-0.0990&-0.2104&-0.2815\\
0.6&100&-0.2520&-0.1952&-0.3175&-0.3783\\
0.3&300&0.1328&0.1197&0.1804&0.0291\\
0.4&300&0.0753&0.0733&0.1366&-0.0256\\
0.5&300&0.0063&0.0188&0.0909&-0.0853\\
0.6&300&-0.0539&-0.0299&0.0393&-0.1509\\
0.3&500&0.0929&0.1328&0.0454&0.0126\\
0.4&500&0.0534&0.0851&0.0146&-0.0213\\
0.5&500&0.0209&0.0413&-0.0170&-0.0650\\
0.6&500&-0.0213&0.0000&-0.0517&-0.1037\\
\hline
\end{tabular}
\end{center}
\end{table}

The information from Table \ref{err.experiment} is utilized
in the following way. We first build a model for $y$
and a function of $n$ and $a_n$. Based on this model,
for each sample size $n$, we find a value
of $a_n$ such that the discrepancy $y$ between the observed
type-I error and the nominal level disappears.

After some exploratory analysis, we found that a linear regression of $y$
on $1/n$ and $\log(a_n-0.2)$ was satisfactory.
The covariate $\log(a_n-0.2)$ effectively confines the value of
$a_n$ in $(0.2,\infty)$, as suggested by our pilot study.
We next  regress $y$ in $1/n$ and  $\log(a_n-0.2)$.
Solving $\hat y=0$ leads to empirical formulas for $a_n$:
\begin{equation}
\label{an.formula}
a_n
 =
 \left\{
 \begin{array}{ll}
0.2+\exp(-0.959 - 119.899/n)& \mbox{Logistic}\\
0.2+\exp(-0.986 - 77.677/n)& \mbox{Extreme}\\
0.2+\exp(-1.032 - 103.737 /n) & \mbox{Student}$-$t \\
0.2+\exp(-1.410 - 114.433/n)  & \mbox{Normal} \\
    \end{array}
    \right..
 \end{equation}

We have implemented the EM-test using {\tt R} with these empirical formulas
for $a_n$
and the other suggested tuning parameters.
In the next section, we examine the
performance of the EM-test with the recommended parameters.

\setcounter{equation}{0} 

\section{Simulation}
\label{simulation}
The purpose of the simulation study is twofold.
First, we check if the limiting distribution of the
EM-test adequately approximates the finite-sample distribution.
Second, we compare the power of the EM-test with that of the likelihood ratio test (LRT).
Here, the LRT statistic is defined as
\begin{equation*}
M_n
=
2\{\ell_n(\tilde G) - \ell_n(\tilde G_0) \},
\label{Mn.stat}
\end{equation*}
{where
\(
\tilde G = \arg\max_{G \in \GG_2} \{  \ell_n(G)+p_n(\sigma_1, \sigma_2)\) \}
is the penalized maximum likelihood estimator of $G$ under the full model
and $\tilde G_0$ is the maximum likelihood estimator of $G$ under the null hypothesis.
The $p_n(\cdot)$ here is from (\ref{pn.sigma}) with $a_n=1/n$ to
prevent an unbounded log-likelihood;
$\GG_2$ is the parameter space for $G$ under the full model}.
The distributions for the LRT are simulated.

We generated data from various homogeneous distributions with a range of
sample sizes.
The rejection regions of the EM-test statistic $\EM_{n}^{(3)}$ are
based on the limiting distributions given in  Theorem \ref{thm2}.
The rejection rates for $10^5$ repetitions are
given in Table~\ref{typei} at three nominal levels.
Clearly, the type I error rates of the EM-test
are quite close to the nominal levels for all models and sample sizes.
Hence, the limiting distributions provide accurate approximations
for the finite-sample distributions of $\EM_{n}^{(3)}$
coupled with the recommended tuning parameters.

\begin{table}[!ht]
\renewcommand{\arraystretch}{0.45}
\tabcolsep=1.0mm
\caption{\label{typei}Simulated type I error rates for EM-test}
\begin{center}
\begin{tabular}{cc |cccccccccccc}
\hline
$f_0 $&Level&\multicolumn{12}{c}{$n$}\\
\hline
&&50&   75&  100&  200 & 300 & 400&  500 & 600 & 800& 1000& 3000& 5000\\
\hline
            &10\%&10.1&10.1&10.0&9.8&9.9&9.7&9.7&9.9&10.0&9.9&10.0&10.0\\
Logistic&  5\%&5.1&5.1&5.0&5.0&5.0&4.9&5.0&5.0&5.1&4.9&5.1&5.0\\
            &  1\%&1.1&1.0&1.0&1.0&1.0&1.0&1.0&1.0&1.1&1.0&1.0&1.0\\
            \hline
            &10\%&10.5&10.2&10.1&10.0&10.0&10.0&10.0&10.0&10.1&10.2&10.1&10.1\\
Extreme&  5\%&5.3&5.2&5.1&5.1&5.0&5.0&5.1&5.1&5.1&5.1&5.0&5.1\\
            &  1\%&1.1&1.1&1.1&1.0&1.0&1.0&1.1&1.0&1.0&1.0&1.1&1.0\\
            \hline
            &10\%&10.6&10.0&9.8&9.6&9.6&9.8&9.8&9.6&9.9&9.7&9.8&9.9\\
$t_{6}$&  5\%&5.3&5.2&5.0&4.8&4.9&4.9&5.0&4.8&4.9&4.9&4.9&5.0\\
            &  1\%&1.1&1.1&1.1&1.0&1.0&1.0&1.0&1.0&1.0&1.0&1.0&1.0\\
            \hline
            &10\%&10.2&9.8&9.7&9.6&9.7&10.1&9.9&10.0&9.9&9.8&10.0&10.1\\
$t_{10}$&  5\%&5.2&4.9&4.9&4.9&4.9&5.2&4.9&5.0&5.1&4.9&5.1&5.1\\
            &  1\%&1.1&1.0&1.0&1.0&1.0&1.1&1.0&1.0&1.1&1.0&1.1&1.1\\
            \hline
            &10\%&9.9&9.8&9.7&9.8&9.8&9.9&9.8&9.9&10.1&10.1&10.1&9.9\\
$t_{14}$&  5\%&5.0&4.9&4.8&4.9&5.0&5.0&5.1&5.0&5.2&5.1&5.2&5.0\\
            &  1\%&1.0&1.0&1.0&1.0&1.0&1.0&1.0&1.0&1.1&1.1&1.1&1.0\\
            \hline
            &10\%&10.1&10.3&10.1&9.9&9.9&10.2&10.2&10.2&10.0&10.2&10.1&10.1\\
$N(0,1)$&  5\%&5.1&5.2&5.2&5.0&5.0&5.1&5.1&5.2&5.1&5.2&5.1&5.2\\
            &  1\%&1.0&1.1&1.1&1.0&1.1&1.1&1.0&1.1&1.0&1.1&1.1&1.1\\
\hline
\end{tabular}
\end{center}
\end{table}

Next, we compare the power of the EM-test with the LRT
under the logistic, Weibull, and $t_{6}$ kernels
for two sample sizes: $n=200$ and $n=400$.
The models and the simulated powers of the EM-test
and the LRT at the 5\% nominal level
are presented in Tables \ref{power1}--\ref{power3}.
{The simulated powers are calculated from $10^4$ repetitions.}
For a fair comparison, the rejection regions 
are based on $10^5$ random samples  from the null model.
It can be seen that the EM-test is much more powerful than the LRT in almost
all cases.
When the mixing proportions are $0.05$ and $0.95$, the LRT
is occasionally slightly more powerful.

\begin{table}[!htt]
\renewcommand{\arraystretch}{0.45}
\tabcolsep=1.5mm
\caption{\label{power1}Simulated powers for EM-test and LRT
for logistic mixtures at 5\% nominal level}
\begin{center}
\begin{tabular}{l| cccc }
\hline
Alternative model&$\EM_{n}^{(3)}$&LRT&$\EM_{n}^{(3)}$&LRT\\
\hline
&\multicolumn{2}{c}{$n=200$}&\multicolumn{2}{c}{$n=400$}\\
L1: $0.5\{(0,1)\}+0.5\{(3.0,1.0)\}$    &63.0&34.1&92.0&68.3\\
L2: $0.5\{(0,1)\}+0.5\{(2.0,2.0)\}$    &71.0&50.5&95.5&83.3\\
L3: $0.5\{(0,1)\}+0.5\{(0, 2.3)\}$       &57.7&40.6&88.3&70.4\\
L4: $0.8\{(0,1)\}+0.2\{(3.0,1.0)\}$    &46.3&25.6&78.7&52.7\\
L5: $0.8\{(0,1)\}+0.2\{(2.0, 2.0)\}$    &69.7&54.1&95.3&86.0\\
L6: $0.8\{(0,1)\}+0.2\{(0, 2.3)\}$       &58.1&46.6&88.2&74.7\\
L7: $0.95\{(0,1)\}+0.05\{(5.0,1.0)\}$&45.3&37.2&78.5&68.8\\
L8: $0.95\{(0,1)\}+0.05\{(3.5,2.0)\}$&33.1&33.1&60.7&54.6\\
L9: $0.95\{(0,1)\}+0.05\{(0, 3.5)\}$   &52.0&57.4&79.7&79.9\\
\hline
\end{tabular}
\end{center}
\end{table}

\begin{table}[!htt]
\renewcommand{\arraystretch}{0.45}
\tabcolsep=1.5mm
\caption{\label{power2}
Simulated powers for EM-test and LRT for
extreme-value mixtures at 5\% nominal level}
\begin{center}
\begin{tabular}{l| cccc }
\hline
Alternative model&$\EM_{n}^{(3)}$&LRT&$\EM_{n}^{(3)}$&LRT\\
\hline
&\multicolumn{2}{c}{$n=200$}&\multicolumn{2}{c}{$n=400$}\\
\hline
E1: $0.5\{(0,1)\}+0.5\{(1.8,1.0)\}$    &69.5&42.9&94.5&77.9\\
E2: $0.5\{(0,1)\}+0.5\{(1.3,1.2)\}$    &64.3&39.2&92.5&73.1\\
E3: $0.5\{(0,1)\}+0.5\{(0, 2.0)\}$       &57.8&39.6&87.6&70.7\\
E4: $0.8\{(0,1)\}+0.2\{(1.4,1.0)\}$    &70.8&47.5&95.2&82.1\\
E5: $0.8\{(0,1)\}+0.2\{(1.0,1.2)\}$    &60.1&40.8&89.7&73.1\\
E6: $0.8\{(0,1)\}+0.2\{(0, 2.0)\}$       &67.3&54.0&92.9&83.0\\
E7: $0.95\{(0,1)\}+0.05\{(1.4,1.0)\}$&37.9&28.5&66.2&52.7\\
E8: $0.95\{(0,1)\}+0.05\{(1.0,1.2)\}$&25.9&20.4&45.2&35.8\\
E9: $0.95\{(0,1)\}+0.05\{(0, 2.0)\}$   &26.8&25.4&44.4&40.7\\
\hline
\end{tabular}
\end{center}
\end{table}

\begin{table}[!ht]
\renewcommand{\arraystretch}{0.45}
\tabcolsep=1.5mm
\caption{\label{power3}
Simulated power for EM-test and LRT for $t_6$ mixtures
at 5\% nominal level}
\begin{center}
\begin{tabular}{l | cccc }
\hline
Alternative model&$\EM_{n}^{(3)}$&LRT&$\EM_{n}^{(3)}$&LRT\\
\hline
&\multicolumn{2}{c}{$n=200$}&\multicolumn{2}{c}{$n=400$}\\
\hline
T1: $0.5\{(0,1)\}+0.5\{(1.8,1.0)\}$      &49.1&21.9&81.0&45.7\\
T2: $0.5\{(0,1)\}+0.5\{(2.0,1.5)\}$         &71.9&42.5&95.6&77.3\\
T3: $0.5\{(0,1)\}+0.5\{(0,2.5)\}$        &63.3&42.3&92.8&75.9\\
T4: $0.8\{(0,1)\}+0.2\{(2.5,1.0)\}$     &88.3&69.2&99.5&96.4\\
T5: $0.8\{(0,1)\}+0.2\{(2.0,1.5)\}$     &71.2&45.9&95.8&81.3\\
T6: $0.8\{(0,1)\}+0.2\{(0,2.5)\}$         &59.6&44.7&90.1&73.9\\
T7: $0.95\{(0,1)\}+0.05\{(3.0,1.0)\}$ &30.7&21.1&58.7&42.3\\
T8: $0.95\{(0,1)\}+0.05\{(3.0,2.0)\}$ &40.1&35.6&69.7&62.2\\
T9: $0.95\{(0,1)\}+0.05\{(0,3.5)\}$   &27.9&35.0&51.9&54.0\\
\hline
\end{tabular}
\end{center}
\end{table}

\setcounter{equation}{0} 

\section{Data examples}
\label{DataEx}
We now examine the performance of the EM-test via two
real-data examples.
The first data set concerns the maximum precipitation in
24 hours in Montreal from 1872--2017.
The daily precipitations in Montreal are available from
weatherstats.ca based on Environment and Climate Change Canada data.
We calculate the maximum precipitation in 24 hours (in mm) for each year.
The figures are incomplete in 1873 and 1993,
and hence the observations for those two years are missing.
In total, we have 144 observations.

\cite{Shoukri1988} proposed using the log-logistic distribution
to model the maximum precipitation in 24 hours.
For illustration, we apply the EM-test to the maximum precipitation data
to check for potential heterogeneity through a test of homogeneity.
We log-transform the 144 observations before the EM-test is applied. 
{With the logistic distribution  being the component distribution},
the value of the EM-test statistic is found to be 6.290 with a p-value
of 0.043, calibrated by its limiting distribution.
For comparison, we also calculate the LRT, which is found to be 10.574.
Since both the EM-test and the LRT are invariant to the location and scale transformation,
we obtain their finite-sample distributions by generating
$10^5$ random samples from the standard logistic distribution.
Calibrated by their respective finite-sample distributions,
the p-values of the EM-test and LRT
are found to be 0.043 and 0.072, respectively.
Note that the finite-sample distribution and the limiting distribution of the
EM-test give the same p-value to the third decimal place.
Based on the p-value, the EM-test speaks more forcefully about
the presence of heterogeneity.
Indeed, the LRT fails to reject the homogeneous model at the 5\% level,
but the EM-test detects heterogeneity at the 5\% level.

%
%

Some related statistics for this data set are as follows.
The penalized maximum likelihood estimator of
the mixing distribution is given by
\[
\hat G
=
0.134\{(3.803, 0.124)\} + 0.866\{(4.307, 0.071)\}.
\]
Figure~\ref{preci} gives a histogram of the 144 maximum precipitation values
 along with the homogeneous logistic fitting
and the mixture of two logistic distributions fitting.
Clearly, the mixture
successfully captures the mode around 4.3, but
the homogenous logistic fitting does not.

%

\begin{figure}[!ht]
\centerline{\includegraphics[scale=0.5,angle=-90]{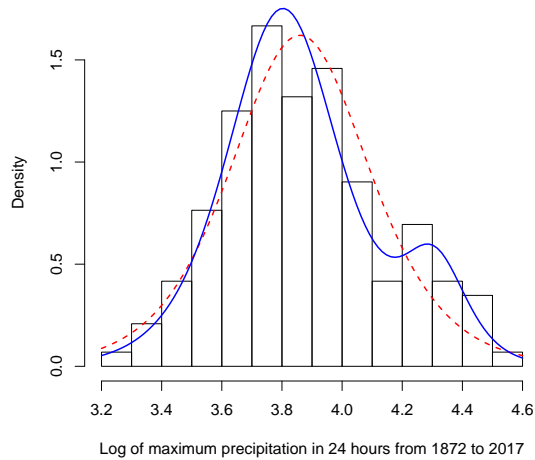}}
\caption{\label{preci}
Histogram of the maximum-precipitation data along with
the homogeneous logistic fitting (dashed red line) and
the mixture of two logistic distributions fitting (solid blue line).
}
\end{figure}

The second data set is from Example 3.4.1 of  \cite{Lawless2003}
who suggested a mixture of two Weibull distributions
for the number of cycles to failure for a group of 60 electrical appliances.
\cite{Lawless2003} argued that this mixture
provides an adequate fit to the 60 failure times
since its fit to the survival function
is quite close to the  Kaplan--Meier estimate of that function.
We apply the EM-test to the 60 log-transformed observations
for homogeneity with extreme-value kernel distributions.
The EM-test statistic is found to be 6.595 with p-value 0.037,
calibrated by its limiting distribution.
For comparison, we also calculate the value of the LRT,
which is found to be 9.669.
Since both the EM-test and the LRT are invariant to the location and
scale transformation,
we obtain their finite-sample distributions
by simulation with $10^5$ data sets.
Calibrated by their respective finite-sample distributions,
the p-values are found to be 0.038 and 0.080, respectively.
Again, the p-values from the finite-sample distribution and
the limiting distribution are quite close.
Further, the EM-test provides stronger evidence for
rejecting the homogeneous model.

%
%



\renewcommand\theequation{A.\arabic{equation}} 
\setcounter{equation}{0} 

\section*{Appendix}

Suppose $\bar G$ is a mixing distribution {with two support points}.
Let its support points be
$\bar{\btheta}_h = (\bar \mu_h, \bar{\sigma}_h)^\tau$, $h=1,2$,
and its mixing proportions $\bar{\alpha}_1$ and $\bar{\alpha}_2$.
This convention extends to $\hat{G}$, $\hat{G}^{(k)}$, and so on.
We use $G_0$ for $G$ when
$\btheta_1 = \btheta_2 = \btheta_0 = (0, 1)^\tau$.
Because the EM-test is location-scale invariant,
we assume that $G_0$ is the null mixing distribution.
Note that $f(x; G_0) = f_0(x)$.

Here are the conditions under which the various asymptotic
results are proved.

\vspace{1ex}
\noindent
{\bf B1}. (Smoothness and integrability)
$f_0(x)$ is five times continuously differentiable with respect to $x$.
For $k =0, 1, 2, 3, 4$,
\[
\bbE \big | \log f^{(k)} _0(X) \big | < \infty;
\]
and for all $\btheta = (\mu, \sigma)$ sufficiently close to $\btheta_0 = (0, 1)$,
\[
\left | \log f_0^{(5)} \left(\frac{x-\mu}{\sigma}\right) \right | \leq g(x)
\]
for some $g(x)$ such that $\bbE \{ g(X)\} < \infty$.

\noindent
{\bf B2}. (Identifiability)
For any two mixing distributions $G_1$ and $G_2$ with
at most two support points,
\(
f(x; G_1) = f(x; G_2)
\)
for all $x$ implies $G_1= G_2$.

\noindent
{\bf B3}. (Tail condition)
For any $0 < a \leq 1$, $f_0(x) \leq f_0(a x)$
and $f_0(x)$ satisfies $\sup_x (1+ x^2) f_0(x) < \infty$.

Conditions B1 and B2 are natural requirements for
ensuring manageable asymptotic properties of
the likelihood ratio statistics.
Condition B3 implies that the density function is unimodal
and the mode is at $x =0$.
If the mode of $f_0$ is at $x = x_0 \neq 0$, then we may
simply replace $f_0(x)$ by $\tilde f_0(x) = f_0(x - x_0)$
in the definition of the mixture model.
All our examples satisfy these conditions.

Next, we place some conditions on
the penalty functions $p(\alpha)$ and $p_n(\sigma)$.

\noindent
{\bf C1}. $p(\alpha)$ is continuous, maximized at $\alpha=0.5$,
and goes to negative infinity as $\alpha \rightarrow 0$.
Without loss of generality, $p(0.5) = 0$.

\noindent
{\bf C2}. $\sup_{\sigma>0} \{p_n(\sigma)\}^+=o(n)$,
$p_n(\sigma)=o(n)$, and $p'_n(\sigma)=o_p(n^{1/2})$ for all $\sigma$.

\noindent
{\bf C3}. $p_n(\sigma)\leq (\log n)^2\log(\sigma)$,
when $0 < \sigma \leq 1/n$ and $n$ is large.

\noindent
{\bf C4}. The penalty on $\sigma$ is scale-invariant:
namely, for any nonrandom constants $a > 0$ and $b$, the data-dependent penalty
{$p_n(a \sigma; ax_1 + b,\ldots, ax_n + b)  = p_n(\sigma; x_1,\ldots,x_n)$.}


These conditions serve as guidelines for
choosing the penalty functions. They are not restrictive
as long as such functions exist. Mathematically,
C1 makes $\alpha = 0.5$ {the} preferred value through $p(\cdot)$.
Conditions C2 and C3 prevent the penalties from
taking over the likelihood, and they discriminate against models
with small $\sigma$ values.  Condition C4 is not needed
for asymptotic considerations
but it ensures location-scale invariance.

\subsection*{Some  lemmas}

We first establish some properties of the point estimators.
Lemma \ref{lem1} gives a result on the order
of some $\bar G$ satisfying certain properties.
Let $\bm_1$ and $\bm_2$ be vectors of ``centered''
first and second moments of $G$,
and $\bm^\tau = (\bm^\tau_1, \bm^\tau_2)$:
\begin{align*}
\bm_1 = &
\alpha_1 (\btheta_1 - \btheta_0)  + \alpha_2 (\btheta_2 - \btheta_0),
\\
\bm_2 = &
\alpha_1 (\btheta_1 - \btheta_0)^2  + \alpha_2 (\btheta_2 - \btheta_0)^2.
\end{align*}
{Here we have used in the definition of $\bm_2$ the convention that
when $\bv = (v_1, v_2)^\tau$,
\(
{\bv^2} = (v_1^2, 2v_1v_2, v_2^2)^\tau.
\)}

\begin{lemma}
\label{lem1}
Assume the conditions of Theorem \ref{thm2}.
Let $\bar G$ be any estimator of $G$ such that
$\bar \alpha_1, \bar \alpha_2 \in [\delta, 1-\delta]$ for some $\delta \in (0,0.5)$
and for some constant $c$,
\[
\tilde{\ell}_n(\bar G)-\tilde{\ell}_n(G_0 ) > c >-\infty.
\]
Then,
for both Cases (i) and (ii)  specified in Theorem \ref{thm2},
\[
\bar{\btheta}_h - \btheta_0= O_p(n^{-1/4}), ~{h=1,2}; ~~
{\bar \bm}_1 = O_p(n^{-1/2}).
\]
\end{lemma}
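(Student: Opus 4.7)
The plan is a classical two-stage rate argument. First I would establish weak consistency $\bar{\btheta}_h \to \btheta_0$ in probability, and then upgrade this to the claimed $n^{-1/4}$ rate through a quadratic expansion of the penalized log-likelihood. The consistency step uses Condition B2 (identifiability) together with the lower bound $\tilde\ell_n(\bar G) - \tilde\ell_n(G_0) \geq c$: via a Wald-type partition argument relying on the tail condition B3 and the penalty Conditions C2--C3, any $\bar G$ remaining at positive distance from $G_0$ would force $n^{-1}\{\tilde\ell_n(\bar G) - \tilde\ell_n(G_0)\} \to -\kappa < 0$, contradicting the hypothesis. Since $\bar\alpha_h \in [\delta, 1-\delta]$ is bounded away from both endpoints, consistency of $\bar G$ in the weak sense transfers to consistency of each individual support point.

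With $\bar{\btheta}_h - \btheta_0 = o_p(1)$ in hand, I would Taylor-expand each $f(x; \bar{\btheta}_h)$ to second order around $\btheta_0$ with a third-order remainder. Setting $\delta_i := f(x_i; \bar G)/f_0(x_i) - 1$ and using the vector-squared convention, one obtains
\[
\delta_i = {\bar \bm}_1^\tau \bb_{1i} + {\bar \bm}_2^\tau \bb_{2i} + \epsilon_{3i},
\]
where $\epsilon_{3i}$ is a cubic remainder dominated by $g(x_i)\{\|\bar{\btheta}_1 - \btheta_0\|^3 + \|\bar{\btheta}_2 - \btheta_0\|^3\}$ in view of Condition B1. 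Applying the elementary bound $2\log(1+x) \leq 2x - x^2 + \tfrac{2}{3}|x|^3$ termwise and summing yields
\[
2\{\ell_n(\bar G) - \ell_n(G_0)\} \leq 2 {\bar \bm}^\tau S_n - {\bar \bm}^\tau W_n {\bar \bm} + R_n,
\]
where ${\bar \bm} = ({\bar \bm}_1^\tau, {\bar \bm}_2^\tau)^\tau$, $S_n = \sum_i \bb_i = O_p(n^{1/2})$ by the CLT, $W_n/n \to \bB$ by the LLN, and $R_n = o_p(n\|{\bar \bm}\|^2)$ using the weak consistency. Penalty differences on the left-hand side are absorbed using Condition C2.

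In Case (i), full-rank $\bB_{11}$ and $\tilde\bB_{22}(\bv^2)^\tau \neq 0$ for every $\bv \neq 0$ together imply ${\bar \bm}^\tau \bB {\bar \bm} \geq c_0\{\|{\bar \bm}_1\|^2 + \|{\bar \bm}_2\|^2\}$ for some $c_0 > 0$, since ${\bar \bm}_2$ is a convex combination of vectors of the form $(v_1^2, 2v_1 v_2, v_2^2)^\tau$ on which $\tilde\bB_{22}$ is strictly positive. Substituting into the hypothesis gives
\[
c \leq O_p(n^{1/2})\|{\bar \bm}\| - c_0\, n\, \|{\bar \bm}\|^2\{1 + o_p(1)\},
\]
which forces $\|{\bar \bm}\| = O_p(n^{-1/2})$. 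Thus ${\bar \bm}_1 = O_p(n^{-1/2})$, and combining ${\bar \bm}_2 = O_p(n^{-1/2})$ with $\min_h \bar\alpha_h \geq \delta$ and the squared-difference structure of ${\bar \bm}_2$ produces $\bar{\btheta}_h - \btheta_0 = O_p(n^{-1/4})$.

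The main obstacle is Case (ii), where $\tilde\bB_{22}$ is singular with null eigenvector $(u_1, 0, u_3)^\tau$ satisfying $u_1 u_3 > 0$, so the unrestricted positivity used in Case (i) no longer holds directly. A direct calculation shows that a single vector $(v_1^2, 2v_1 v_2, v_2^2)^\tau$ can never be a nonzero scalar multiple of $(u_1, 0, u_3)^\tau$ under $u_1 u_3 > 0$: the middle-coordinate condition forces $v_1 v_2 = 0$, contradicting strict positivity of both outer coordinates. Hence any alignment of ${\bar \bm}_2$ with the null direction requires cancellation between the two support points, which in turn couples the null-direction magnitude of ${\bar \bm}_2$ back to ${\bar \bm}_1$ through the two-point structure. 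I would then use the full-rank $\bB_{11}$ block to get a quadratic lower bound of order $n\|{\bar \bm}_1\|^2$, and combine this with a careful bookkeeping of the cross-coupling to close the inequality and recover the same $O_p(n^{-1/2})$ rate for $\|{\bar \bm}\|$ and $O_p(n^{-1/4})$ rate for $\bar{\btheta}_h - \btheta_0$.
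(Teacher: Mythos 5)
Your route is the paper's route. The paper likewise first reduces to $\bar{\btheta}_h - \btheta_0 = o_p(1)$ (it simply \emph{assumes} this inside the proof of Lemma \ref{lem1} and relegates the Wald/Kiefer--Wolfowitz argument to Lemmas \ref{lemA1}--\ref{lemA4}), then expands $\delta_i = {\bar \bm}_1^\tau \bb_{1i} + {\bar \bm}_2^\tau \bb_{2i} + \varepsilon_{in}$ with $\sum_i \varepsilon_{in} = o_p(1+n\|{\bar \bm}\|^2)$, applies $2\log(1+x)\leq 2x - x^2 + \tfrac{2}{3}x^3$, and reads off the rates from the quadratic upper bound after the block decomposition ${\bar \bm}^\tau \bB {\bar \bm} = {\bar \bm}_1^{*\tau}\bB_{11}{\bar \bm}_1^* + {\bar \bm}_2^\tau {\tilde \bB}_{22}{\bar \bm}_2$ with ${\bar \bm}_1^* = {\bar \bm}_1 - \bB_{11}^{-1}\bB_{12}{\bar \bm}_2$. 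Your Case (i) matches the paper essentially line for line.

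The genuine gap is Case (ii), which you have correctly located but not closed. When ${\tilde \bB}_{22}$ has null eigenvector $(u_1,0,u_3)^\tau$, the quadratic term penalizes only the projection of ${\bar \bm}_2$ off that direction, and the linear term $2{\bar \bm}_2^\tau\sum_i{\tilde \bb}_{2i}$ is also blind to it (the corresponding linear combination of the components of ${\tilde \bb}_{2i}$ is degenerate), so nothing in your displayed inequality forces the null component $t$ of ${\bar \bm}_2$ to be $O_p(n^{-1/2})$, while the remainder still carries an uncontrolled $o_p(nt^2)$. Your proposed repair --- that cancellation between the two support points ``couples $t$ back to ${\bar \bm}_1$'' --- is the right instinct but is not the mechanism, and ``careful bookkeeping'' is precisely where the proof must live. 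What actually closes it is the exact two-support-point identity: writing $A=\sum_h\bar\alpha_h\bar\mu_h^2$, $C=\sum_h\bar\alpha_h(\bar\sigma_h-1)^2$, $B=\sum_h\bar\alpha_h\bar\mu_h(\bar\sigma_h-1)$, one has $(A-m_{11}^2)(C-m_{12}^2)=(B-m_{11}m_{12})^2$ with $m_{11},m_{12}$ the coordinates of ${\bar \bm}_1$; the two directions the quadratic form does control give $|B|$ and $|u_3A-u_1C|$ of order $O_p(n^{-1/2})$ up to negligible terms, and $u_1u_3>0$ then forces $t\asymp A+C=O_p(n^{-1/2})$. For what it is worth, the paper compresses all of this into the unproved one-line assertion that ${\bar \bm}_2$ ``is not in the null space of ${\tilde \bB}_{22}$,'' which is not by itself sufficient (one needs quantitative coercivity on the cone of reachable ${\bar \bm}_2$, not pointwise positivity); the same caveat applies to your Case (i) step, since positive combinations of two vectors $\bv^2$ fill the whole cone $\{(a,b,c): a,c\geq 0,\ b^2\leq 4ac\}$ and can enter the null space of a singular ${\tilde \bB}_{22}$ even when no single $\bv^2$ does.
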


\begin{proof}
Assume $\bar{\btheta}_h - \btheta_0 = o_p(1)$ under the lemma conditions.
Because the proof of this claim is tedious, we will present it separately.
With this assumption, we define
\bea
R_{1n}(\bar G)
&=&
2\{\ell_n(\bar G)- \ell_n(G_0)\} +
2 \{p_n(\bar\sigma_1, \bar\sigma_2) - p_n(1,1) + p(\bar\alpha_1, \bar \alpha_2) )\}
\nonumber \\
\label{penal.LR}
&=&
r_{1n}(\bar G) + O_p(1)
\eea
with
\[
r_{1n}(\bar{G}) = 2\{\ell_n(\bar G)- \ell_n(G_0)\}
= 2\sum_{i=1}^n\log(1+\delta_i)
\]
and
\[
\delta_i=
\big [
\{ \bar\alpha_1 f(x_{i};\bar{\btheta}_1)
+
\bar\alpha_2 f(x_{i}; \bar{\btheta}_2 ) \}
-
f_0(x_{i})
\big ]
{\big/}f_0(x_{i}).
\]

Expanding $f(x_i; \btheta)$ at $\btheta_0$, we obtain
\begin{equation}
\label{delta.linear}
\delta_i
=
\bar \bm_1^\tau \bb_{1i} + \bar \bm_2^\tau \bb_{2i} + \varepsilon_{in}
=
\bar \bm^\tau \bb_i + \varepsilon_{in}
\end{equation}
where the $\varepsilon_{in}$ denote remainders.
For $\btheta$ such that $\btheta - \btheta_0$ is very small,
\begin{equation}
\label{remainder}
\sum_{i=1}^n\varepsilon_{in}=o_p(1 + n\|{\bar \bm}\|^2)
\end{equation}
by referring to similar proofs given by \cite{Chen2001A} and \cite{Chen2001}.

Next, we use \eqref{delta.linear} and
\begin{equation}
r_{1n}(\bar{G})
\leq
2\sum_{i=1}^n \delta_i - \sum_{i=1}^n\delta_i^2
+ (2/3)\sum_{i=1}^n\delta_i^3
\label{inequality.R1n}
\end{equation}
to develop an upper bound for $R_{1n}(\bar G)$.
By some  straightforward algebra, we have
\begin{equation}
\label{delta.square.cubic}
\begin{cases}
\sum_{i=1}^n\delta_i^2
=
\sum_{i=1}^n (\bar{\bm}^\tau \bb_i )^2 +o_p(1 + n\|{\bar \bm}\|^2),
\\
\sum_{i=1}^n\delta_i^3
=
\sum_{i=1}^n (\bar{\bm}^\tau \bb_i )^3 +o_p(1 + n\|{\bar \bm}\|^2).
\end{cases}
\end{equation}

Note that
\(
{\bar \bm}^\tau {\bB} {\bar \bm}
=
\bar \bm_1^{*\tau} \bB_{11} \bar \bm_1^*
+ \bar \bm_2^\tau {\tilde \bB}_{22}\bar \bm_2
\)
where $\bar \bm_1^* = \bar \bm_1 - \bB_{11}^{-1} \bB_{12} \bar \bm_2$.
{Because either $\bB$ has full rank
or $\bar m_2$ is not in the null space of $\tilde \bB_{22}$,
we have ${\bar \bm}^\tau {\bB} {\bar \bm} > 0$ when $\bar \bm \neq 0$.}

The positive definiteness and the law of large numbers imply
\be
\begin{cases}
n^{-1} \sum_{i=1}^n (\bar{\bm}^\tau \bb_i )^2
=
{\bar \bm}^\tau {\bB} {\bar \bm} \{1+o_p(1)\},        \\
n^{-1} \sum_{i=1}^n (\bar{\bm}^\tau \bb_i )^3
= o_p(1) .
\end{cases}
\label{square.cubic}
\ee
Using Condition C1 and combining \eqref{penal.LR}--\eqref{square.cubic}, we have
\ba
R_{1n}(\bar G)
&\leq &
2{\bar \bm}^\tau\sum_{i=1}^n\bb_i
-
n{\bar \bm}^\tau {\bB}{\bar \bm}\{1+o_p(1)\}
+
o_p(1 + n\|{\bar \bm}\|^2)
\nonumber \\
&=&
2 {\bar \bm^{*\tau}_1}\sum_{i=1}^n\bb_{1i}
-
n{\bar \bm^{*\tau}_1}{\bB_{11}}{\bar \bm^*_1}
\nonumber\\
&&
+ 2{\bar \bm_2}^\tau \sum_{i=1}^n{\tilde \bb}_{2i}
-
n {\bar \bm_2}^\tau{\tilde \bB_{22}} {\bar \bm_2}
+
o_p(1 + n\|{\bar \bm}\|^2).
\label{upper.bound2}
\ea
Unless both ${\bar \bm^{*}_1}=O_p(n^{-1/2})$ and ${\bar \bm_2}=O_p(n^{-1/2})$,
this upper bound will go to $- \infty$, which contradicts the
lemma assumption.
When this is the case and
$\bar \alpha_1, \bar \alpha_2 \in [\delta, 1 - \delta]$,
we must have ${\bar \bm_1}=O_p(n^{-1/2})$,
and both $\bar{\btheta}_1 - \btheta_0 = O_p(n^{-1/4})$
and $\bar{\btheta}_2 - \btheta_0 = O_p(n^{-1/4})$.
\end{proof}

Let $\bar G$ be estimators of  $G$ as before and
\[
\bar \omega_i
=\dfrac{\bar\alpha_2 f(x_{i}; {\bar \btheta}_2)}
{ {\bar \alpha}_1 f(x_{i}; \bar\btheta_1)+{\bar \alpha}_2 f(x_{i}; {\bar\btheta}_2)}.
\]
Define
\[
H_n(\alpha_1)
=
\left(n-\sum\limits_{i=1}^{n}\bar \omega_i\right)\log \alpha_1+
\sum\limits_{i=1}^{n}\bar \omega_i \log(\alpha_2) +p(\alpha_1, \alpha_2).
\]
The EM-test updates the mixing proportions via
$\bar\alpha_1^{*} = \arg\max_{\alpha} H_n(\alpha_1)$.
The following lemma claims that when the null model is true,
$\bar\alpha_1^{*}$ stays close to $\alpha_1$ after a single
EM-iteration. The proof is identical to one in
\cite{Li2009}, so it is omitted.

\begin{lemma}
\label{lem3}
Under the conditions of Lemma \ref{lem1},
if $\bar \alpha_1 - \alpha_1=o_p(1)$ for some $\alpha_1 \in (0,0.5]$,
then $\bar\alpha_1^{*}-\alpha_1=o_p(1)$.
\end{lemma}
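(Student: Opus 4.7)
I plan to follow the standard argmax-of-concave-function strategy used in \cite{Li2009}: establish that the normalized objective $n^{-1} H_n(\cdot)$ converges in probability, uniformly on compact subsets of $(0,1)$, to a deterministic strictly concave limit whose unique maximizer is $\alpha_1$, and then invoke a textbook argmax theorem.

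The key preliminary step is to show that
\[
\frac{1}{n}\sum_{i=1}^n \bar\omega_i = \bar\alpha_2 + o_p(1) = 1 - \alpha_1 + o_p(1).
\]
The driving observation is that if $\bar\btheta_1 = \bar\btheta_2 = \btheta_0$ then $\bar\omega_i \equiv \bar\alpha_2$ exactly. Writing $\bxi_h = \bar\btheta_h - \btheta_0 = O_p(n^{-1/4})$ by Lemma \ref{lem1} and Taylor expanding $f(x_i; \bar\btheta_h)/f_0(x_i) = 1 + \bxi_h^\tau \bb_{1i} + O(\|\bxi_h\|^2 \cdot \text{stuff})$ under Condition B1, one obtains
\[
\bar\omega_i - \bar\alpha_2 = \bar\alpha_1\bar\alpha_2 (\bxi_2 - \bxi_1)^\tau \bb_{1i} + R_i,
\]
where the bound $\bar\alpha_h \in [\delta, 1-\delta]$ keeps denominators away from zero. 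Averaging over $i$, the first term is $O_p(n^{-1/4}) \cdot O_p(n^{-1/2}) = o_p(1)$ by the CLT applied to $n^{-1}\sum \bb_{1i}$, and the remainder average is controlled by standard second-moment bounds under Condition B1.

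Substituting into the definition of $H_n$ and dividing by $n$, then on every compact $K \subset (0,1)$,
\[
\frac{1}{n} H_n(a) = \alpha_1 \log a + (1-\alpha_1)\log(1-a) + o_p(1),
\]
uniformly in $a \in K$, since $n^{-1} p(a, 1-a) = o(1)$ on $K$ by Condition C1. The limit $Q(a) := \alpha_1 \log a + (1-\alpha_1)\log(1-a)$ is strictly concave on $(0,1)$ with unique maximum at $a = \alpha_1 \in (0, 0.5]$.

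Finally, the order-$n$ log terms $(n - \sum \bar\omega_i)\log a$ and $\sum \bar\omega_i \log(1-a)$ both diverge to $-\infty$ at the respective boundaries, reinforced by $p(a, 1-a) \to -\infty$ as $a \to 0$ or $a \to 1$ by Condition C1. This confines $\bar\alpha_1^{*}$ to some compact subset $[\eta, 1-\eta] \subset (0,1)$ with probability tending to one (for $\eta < \alpha_1$), and strict concavity of $Q$ combined with uniform convergence on $[\eta, 1-\eta]$ then yields $\bar\alpha_1^{*} \cp \alpha_1$. The only genuinely delicate step is the averaged approximation of $\bar\omega_i$; once that is in hand the conclusion is a routine argmax argument for strictly concave objectives.
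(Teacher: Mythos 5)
The paper itself omits this proof, saying only that it is identical to one in \cite{Li2009}, and your argument is essentially that standard one: use the consistency of $\bar{\btheta}_h$ from Lemma~\ref{lem1} to get $n^{-1}\sum_{i=1}^n\bar\omega_i=\bar\alpha_2+o_p(1)=1-\alpha_1+o_p(1)$, then conclude from strict concavity that the maximizer of $H_n$ tracks $1-n^{-1}\sum_{i=1}^n\bar\omega_i$ up to the effect of the penalty, which is $O(1)$ on compacts and only serves to push $\bar\alpha_1^{*}$ away from the boundary. I see no gap; your first-order expansion of $\bar\omega_i-\bar\alpha_2$ and the argmax/boundary argument are the right ingredients.
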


\begin{theorem}
\label{thm1}
Assume the Conditions of Theorem \ref{thm2} and the null distribution.
Let $G^{(k)}$ be the intermediate $G$ obtained with
the starting mixing proportion $\alpha$ for $\alpha_1$ after $k$ iterations.
Then
\[
\btheta_1^{(k)} - \btheta_0 = O_p (n^{-1/4}), ~~
\btheta_2^{(k)} - \btheta_0 = O_p (n^{-1/4}), ~~
\bm_1^{(k)} = O_p(n^{-1/2}).
\]
\end{theorem}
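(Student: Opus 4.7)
\noindent\textbf{Proof plan for Theorem \ref{thm1}.}
The plan is to establish both rate statements jointly by induction on the iteration index $k \in \{0, 1, \ldots, K\}$, reducing each step to an application of Lemma \ref{lem1}, combined with the likelihood-ascent property of the penalized EM update and the proportion control supplied by Lemma \ref{lem3}.

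For the base step $k = 0$, Step 1 of the algorithm defines $(\btheta_1^{(0)}, \btheta_2^{(0)})$ as the maximizer of $\tilde\ell_n(\pi_j, 1-\pi_j, \btheta_1, \btheta_2)$ with $\alpha_1^{(0)} = \pi_j \in (0, 0.5]$ fixed. Comparing with the feasible point $\btheta_1 = \btheta_2 = \btheta_0$ gives
\[
\tilde\ell_n(G^{(0)}) \;\geq\; \ell_n(G_0) + p(\pi_j, 1-\pi_j) + p_n(1, 1),
\]
so $\tilde\ell_n(G^{(0)}) - \tilde\ell_n(G_0)$ is bounded below by a finite constant: $p(\pi_j, 1 - \pi_j)$ is finite by Condition C1, and $p_n(1, 1)$ is $O_p(1)$ by Conditions C2--C4. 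Since also $\alpha_1^{(0)}, \alpha_2^{(0)} \in [\pi_j, 1 - \pi_j]$, all hypotheses of Lemma \ref{lem1} are met, and we obtain $\btheta_h^{(0)} - \btheta_0 = O_p(n^{-1/4})$ for $h = 1, 2$ and $\bm_1^{(0)} = O_p(n^{-1/2})$.

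The inductive step from $k$ to $k+1$ requires controlling three quantities. First, the mixing proportions: since $\alpha_1^{(0)} = \pi_j$ exactly, applying Lemma \ref{lem3} once per EM cycle with reference value $\pi_j$ yields $\alpha_1^{(k+1)} - \pi_j = o_p(1)$, so with probability tending to one both $\alpha_1^{(k+1)}$ and $\alpha_2^{(k+1)}$ lie in $[\delta, 1 - \delta]$ for some $\delta > 0$ independent of $n$. Second, the penalized likelihood: a standard EM monotonicity argument, adapted to the additive penalties $p(\alpha_1, \alpha_2)$ and $p_n(\sigma_1, \sigma_2)$ (each depending only on variables updated in the corresponding M-step), gives $\tilde\ell_n(G^{(k+1)}) \geq \tilde\ell_n(G^{(k)})$. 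Chaining this inequality back to $k = 0$ preserves the lower bound $\tilde\ell_n(G^{(k+1)}) - \tilde\ell_n(G_0) > c$ obtained in the base step. With the proportions bounded away from $\{0, 1\}$ and this likelihood inequality in hand, Lemma \ref{lem1} applies to $G^{(k+1)}$ and delivers the stated rates.

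The main obstacle is the consistency input required by Lemma \ref{lem1}, namely $\btheta_h^{(k)} - \btheta_0 = o_p(1)$, a claim flagged there as needing a separate treatment. At $k = 0$ it must be proved directly for the penalized maximizer in Step 1: Condition C3 prevents drift into the degenerate region $\sigma \to 0$ that would otherwise exploit the unboundedness of the log-likelihood, while Conditions B2 and B3 force any limit point of $\btheta_h^{(0)}$ to coincide with $\btheta_0$. For $k \geq 1$, consistency at step $k+1$ is inherited from consistency at step $k$ through the continuity of the E-step weights $w_i^{(k)}$ and of the M-step maps in the current parameters, together with the uniform control provided by the penalties. Reconciling the penalized maximization with the unbounded likelihood at the initial step is where the technical difficulty concentrates.
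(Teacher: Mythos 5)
Your proposal is correct and follows essentially the same route as the paper: the paper's own proof invokes the monotonicity of the penalized EM iteration to get $\tilde{\ell}_n(G^{(k)}) - \tilde{\ell}_n(G_0) \geq c > -\infty$ and then cites Lemmas \ref{lem1} and \ref{lem3} to conclude, deferring the consistency input $\bar\btheta_h - \btheta_0 = o_p(1)$ to the separate ``Proof of consistency'' subsection exactly as you flag. Your write-up merely makes explicit the base-case comparison with the feasible point $\btheta_1=\btheta_2=\btheta_0$ and the induction that the paper leaves implicit.
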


\noindent
\begin{proof}
The EM-algorithm has the property that the likelihood
increases after each iteration even with penalty terms
(\citealp{Dempster1977}; \citealp{Wu1983}). Hence, for any $k\leq K$,
\[
\tilde{\ell}_n(G^{(k)})
\geq
\tilde{\ell}_n(G^{(1)})
\geq
\tilde{\ell}_n(G_0).
\]
Therefore,
\[
\tilde{\ell}_n(G^{(k)}) - \tilde{\ell}_n(G_0) \geq c > -\infty.
\]
Hence, by Lemmas \ref{lem1} and \ref{lem3}, {$G^{(k)}$} has these properties.
\end{proof}

Here is some preparation for the proof of Theorem  \ref{thm2}. Let
${\bar \bv}
=\sqrt{\bar\alpha_1/{\bar\alpha}_2} (\bar{\btheta}_1 - \btheta_0)
$.
We have
\begin{align*}
\bar \bm_2 - \bar \bv^2
=&
\bar \alpha_2^{-1}
\{ \bar \alpha_2^2 (\bar \btheta_2 - \btheta_0)^2
-
 \bar \alpha_1^2 (\bar \btheta_1 - \btheta_0)^2 \}
 \\
=&
 \bar \alpha_2^{-1} \bar{\bm}_1
\{
\bar \alpha_2 (\bar \btheta_2 - \btheta_0)
-
\bar \alpha_1 (\bar \btheta_1 - \btheta_0)
\}
= \bar{\bm}_1 o_p(1).
\end{align*}
In addition,
\bea
2{\bar \bm_2}^\tau \sum_{i=1}^n{\tilde \bb}_{2i}
-
n{\bar \bm_2}^\tau{\tilde \bB_{22}}{\bar \bm_2}
&=&
2{(\bar \bv^2)^\tau} \sum_{i=1}^n{\tilde \bb}_{2i}
-
n{(\bar \bv^2)}^\tau{\tilde \bB_{22}}{\bar \bv^2}
\nonumber \\
&&
+
o_p(1 + n\|{\bar \bm}\|^2).
\label{side}
\eea

\noindent
{\bf Proof of Theorem \ref{thm2}.}

First, we consider the case where the EM-iteration starts
from $\pi_1 = 0.5$. We write its outcome as $\hat G_{0.5}$.
Let
$R_{0n}
=
2\{\tilde{\ell}_n(\hat G_{0.5})
-
\tilde{\ell}_n(G_0)\}$.
A classical result concerning regular models (\citealp{Serfling1980})
states that
\[
R_{0n}
=
n^{-1}
\big \{ \sum_{i=1}^n \bb_{1i} \big \}^\tau
\bB_{11}^{-1} \big \{ \sum_{i=1}^n \bb_{1i} \big \}
+o_p(1).
\]
Hence,
\begin{equation*}
2{\bar \bm^{*\tau}_1}
\sum_{i=1}^n \bb_{1i}
-n{\bar \bm^{*\tau}_1}{\bf B_{11}}{\bar \bm^*_1}\{1+o_p(1)\}
\leq
R_{0n}+o_p(1).
\label{LLR1}
\end{equation*}
Under the theorem conditions, and with \eqref{side}, we have
\ba
R_{1n}(G^{(k)})
&\leq &
R_{0n}
+\sup_{\bm_2}
\big \{
2{\bm_{2}^\tau}\sum_{i=1}^n{\tilde \bb}_{2i}
-
n{\bm_2^\tau}{\bf\tilde B_{2}}{\bm_2}
\big \} +o_p(1) \\
&=&
R_{0n}
+ \sup_{\bf v}
\big \{
2{(\bv^2)^\tau}\sum_{i=1}^n{\tilde \bb}_{2i}
-n{(\bv^2)^\tau}{\bf\tilde B_{22}}{\bv^2}
\big \} +o_p(1).
\ea
Further, by the definition of $M_n^{(k)}(\pi_j)$,
we have
\begin{equation*}
M_n^{(k)}(\pi_j) = R_{1n}(G^{(k)}) - R_{0n}
\leq
\sup_{\bf v}
\big \{
2{(\bv^2)^\tau}\sum_{i=1}^n{\tilde \bb}_{2i}
-n{(\bv^2)^\tau}{\bf\tilde B_{22}}{\bv^2}
\big \}
+ o_p(1).
\end{equation*}
The leading term on the right-hand side
does not depend on $\pi_j$, so
\begin{equation*}
\EM_n^{(K)}
\leq
\sup_{\bf v}
\big \{
2{(\bv^2)^\tau}\sum_{i=1}^n{\tilde \bb}_{2i}
-n{(\bv^2)^\tau}{\bf\tilde B_{22}}{\bv^2}
\big \} + o_p(1).
\label{case.I.upperbound}
\end{equation*}

Next, we show that the above inequality can be
tightened to equality.
Since the EM-iteration increases the penalized
likelihood (\citealp{Dempster1977}), we need only show
this result when $k=1$.
It suffices to find a $\hat G$ at which the upper bound is attained.
Let
\begin{align*}
&{\hat \bv}
=
\arg\sup_{\bv}
\left \{
2{(\bv^2)^\tau}\sum_{i=1}^n{\tilde \bb}_{2i}
-n{(\bv^2)^\tau}{\bf\tilde B_{22}}{\bv^2}
\right \},
\\
&{\hat \bm_1}
=
(n{\bB_{11}})^{-1}\sum_{i=1}^n\bb_{1i}
+
{\bB_{11}^{-1}\bB_{12}}{\hat \bv^2}.
\end{align*}
Further, let $\hat\alpha_1 = \hat \alpha_2 =0.5$,
$\hat\mu_1 = \hat v_1$, and $\hat \sigma_1 = \hat v_2 + 1$.
Regard $(\mu_2, \sigma_2)$ as variables in the
equation
\begin{equation*}
\hat \bm_1=
\begin{pmatrix}
\hat\alpha_1 \hat\mu_1+\hat \alpha_2 \hat\mu_2\\
\hat\alpha_1 (\hat\sigma_1-1)+\hat\alpha_2 (\hat\sigma_2-1)
\end{pmatrix},
\label{hatm1}
\end{equation*}
and let its solution be $(\hat \mu_2,\hat \sigma_2)$.
The solutions clearly satisfy
$\hat\mu_h=O_p(n^{-1/4})$,
$\hat\sigma_h=O_p(n^{-1/4})$, $h=1,2$.
Based on this order assessment,  we get
\begin{align*}
\EM_n^{(K)}
\geq
&~
M_n^{(1)}(0.5)
\geq
R_{1n}(\hat G_{0.5})-R_{0n}\\
=
&~
\sup_{\bv}
\big \{
2{(\bv^2)^\tau}\sum_{i=1}^n{\tilde \bb}_{2i}
-n{(\bv^2)^\tau}{\bf\tilde B_{22}}{\bv^2}
\big \}
+o_p(1).
\end{align*}
Since
$n^{-1/2}\sum_{i=1}^n{\tilde \bb}_{2i}
\to N(0, \tilde \bB_{22})$ in distribution, we get
\[
\EM_n^{(K)}
\to
\sup_{\bv}
\{
2 (\bv^2)^\tau \bw - (\bv^2)^\tau \tilde \bB_{22} (\bv^2)
\}
\]
for some multivariate normal random vector $\bw$
as given in the theorem.
This completes the proof of Case (i).

When $\bB_{22}$ has rank 2 as specified in Case (ii),
we must have $w_3 = a w_1$ for some $a < 0$.
Let $\bt = \left( v_1^2 + a v_2^2, 2v_1 v_2 \right)^\tau$,
{$\bw^* = (w_1, w_2)^\tau$},
and $\Sigma = \var( \bw^*)$.
The limit of $\EM_n^{(K)}$ is
\[
\sup_{\bt}
\{
2 \bt^\tau \bw^* - \bt^\tau \Sigma \bt
\}
\leq
\bw^* \Sigma^{-1} \bw^*,
\]
and equality holds
if $\Sigma^{-1} \bx = \left( v_1^2 + a v_2^2, 2v_1 v_2 \right)^\tau$
has a solution in $\bv$. The solution exists because
$\bt = \left( v_1^2 + a v_2^2, 2v_1 v_2 \right)^\tau$
can take any values in $\cR^2$.
Clearly, $\bw^* \Sigma^{-1} \bw^*$ has a $\chi_2^2$ distribution.
Hence, the limiting distribution in Case (ii) has the simpler
form given.

\subsection*{Proof of consistency}
A missing piece in the proof of Theorem \ref{thm2} is that
$\bar G$ satisfies
\(
\bar \btheta_h - \btheta_0 = o_p(1)
\)
for $h=1, 2$.
Since $\bar \alpha_1$ is bounded away from both
$0$ and $1$ by design, the above claim is implied should
$\bar G$ be consistent. Consistency of $\bar G$
in turn is implied by general consistency of
{the penalized
MLE}, a topic discussed by \cite{Chen2008},
\cite{Tanaka2009}, and \cite{Chen2017} in similar
situations.
Since consistency itself is not the focus of this paper,
we give a nonrigorous proof aided by
intuition. We plan to develop a full proof in the future.

\begin{lemma}
\label{lemA1}
{Let $x_1, \ldots, x_n$ be {\iid} random observations from  $f_0(x)$
with $\sup_x f_0(x) = M_0 = 1$.
Then
\[
\sup_{\mu}
| F_n(x + \epsilon) - F_n(x) |
\leq
2 \epsilon+10 n^{-1} \log n
\]
holds uniformly for all $\epsilon>0$ almost surely.
Here $F_n(x)$ is the empirical cumulative distribution function of $x_1, \ldots, x_n$.}
\end{lemma}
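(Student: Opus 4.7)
The plan is to reduce the uniform oscillation bound for $F_n$ to a uniform concentration inequality for the spacings of uniform order statistics, and then combine a pointwise Chernoff estimate with a union bound and Borel--Cantelli. The first move is to dispose of the centering: since $\sup_x f_0(x) = 1$, we have $F(x+\epsilon) - F(x) \leq \epsilon$ for every $x$ and $\epsilon > 0$, so it suffices to show
\[
\sup_{x, \epsilon > 0}\big\{F_n(x+\epsilon) - F_n(x) - [F(x+\epsilon)-F(x)]\big\} \leq \epsilon + 10 n^{-1}\log n \quad \text{a.s.\ eventually.}
\]
Since $F_n$ only jumps at the order statistics $X_{(1)} < \cdots < X_{(n)}$, the worst-case $(x,\epsilon)$ (up to arbitrarily small error) is attained when the interval $(x, x+\epsilon]$ has both endpoints infinitesimally outside a pair of order statistics. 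If we let $i$ be the smallest and $j$ the largest index of an order statistic in $(x, x+\epsilon]$, then $F_n(x+\epsilon) - F_n(x) = (j-i+1)/n$ and $\epsilon \geq X_{(j)} - X_{(i)}$, while the Lipschitz bound on $F$ gives $F(x+\epsilon) - F(x) \geq F(X_{(j)}) - F(X_{(i)})$. Passing to the uniform order statistics $U_k = F(X_{(k)})$, the task reduces to proving
\[
\sup_{1\leq i \leq j \leq n}\big\{(j-i+1)/n - 2(U_j - U_i)\big\} \leq 10 n^{-1}\log n \quad \text{a.s.\ eventually.}
\]

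For a fixed pair $(i,j)$ with $m = j-i$, I would use the classical duality $P(U_j - U_i \leq v) = P(\mathrm{Bin}(n,v) \geq m)$. Setting $v = (m+1)/(2n) - 5 n^{-1}\log n$, note first that whenever $m + 1 < 10 \log n$ the threshold $v$ is negative and the event is impossible. For $m + 1 \geq 10\log n$ the corresponding binomial mean is $nv = (m+1)/2 - 5\log n$, so we need a Chernoff/Bernstein tail bound on $\mathrm{Bin}(n,v)$ exceeding $m$, which is roughly twice its mean. Using Bernstein in the form
\[
P\big(\mathrm{Bin}(n,v) - nv \geq t\big) \leq \exp\Big\{-\frac{t^{2}}{2nv + 2t/3}\Big\}
\]
with $t = (m-1)/2 + 5\log n$, one verifies that the exponent dominates a constant multiple of $\log n$ that can be made as large as desired by taking the numerical constant in front of $n^{-1}\log n$ sufficiently large.

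The final step is a union bound: there are at most $\binom{n}{2} + n \leq n^2$ admissible pairs $(i,j)$, so the probability of any violation is at most $n^2 \cdot n^{-\gamma}$ for the exponent $\gamma$ produced above. Choosing the multiplicative constant large enough makes $\gamma > 3$ (so that the resulting bound is summable in $n$), and Borel--Cantelli delivers the almost-sure statement.

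The main obstacle is the numerical constant: getting exactly $10$ rather than a generic large constant requires careful accounting in the Bernstein step so that the loss in the union bound over $O(n^2)$ pairs and the two regimes ($m \asymp \log n$ and $m \gg \log n$) is absorbed cleanly. I expect this bookkeeping to be the only tedious piece; the structural argument, namely reducing to the uniform spacings and using the binomial representation of the $U_j - U_i$ tail, is routine.
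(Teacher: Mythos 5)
Your argument is correct, but note first that the paper itself contains no proof of this lemma: it is stated with the one-line remark that ``this result can be found in \cite{Chen2017}.'' The classical route in that reference and its precursors is to treat $n\{F_n(b)-F_n(a)\}$ as a binomial count, apply Bernstein's inequality over a deterministic grid of quantile points, and interpolate by monotonicity. Your proof is a genuine, self-contained alternative of the same flavour: you discretize at the random order statistics instead of a fixed grid, pass to the uniform spacings $U_j-U_i\sim \mathrm{Beta}(j-i,\,n-j+i+1)$, and use the duality $P(U_j-U_i\le v)=P(\mathrm{Bin}(n,v)\ge j-i)$ to reduce everything to one binomial tail bound per pair $(i,j)$. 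The reduction is airtight (shrinking the interval to the outermost order statistics it contains only increases the quantity being bounded, and $F(X_{(j)})-F(X_{(i)})\le X_{(j)}-X_{(i)}$ because $\sup_x f_0(x)=1$ makes $F$ $1$-Lipschitz), and the one item you leave open --- whether the constant $10$ survives the bookkeeping --- does check out: writing $m=j-i$ and $m\approx c\log n$ with $c\ge 10$, your Bernstein exponent $t^2/(2nv+2t/3)$ with $t=(m-1)/2+5\log n$ and $nv=(m+1)/2-5\log n$ behaves like $\tfrac{3(c+10)^2}{16(c-5)}\log n$, which is minimized at $c=20$ with value $11.25\log n$; after the union bound over at most $n^2$ pairs this leaves a summable $O(n^{-9.25})$ tail, so Borel--Cantelli applies with room to spare (the conclusion being, as you say, an ``almost surely for all large $n$'' statement). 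What your approach buys is a fully explicit and elementary derivation with transparent constants; what the paper's citation buys is brevity, at the cost of outsourcing the entire argument.
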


Remark: A scale transformation will make $M_0 = 1$,
which simplifies the presentation.
This result can be found in \cite{Chen2017}.

\begin{lemma}
\label{lemA2}
Assume Conditions B1--B3.
Let $G = \alpha_1 \{ \btheta_1\} + \alpha_2 \{\btheta_2\}$
be a mixing distribution with $\sigma_1 \leq \sigma_2$.
For some positive constants $\delta_0, \epsilon_0$, define
\[
g(x; G)
=
\delta_0 + \frac{\alpha_1}{\epsilon_0} f_0 \left ( \frac{x - \mu_1}{\epsilon_0} \right)
+
\frac{\alpha_2}{\sigma_2} f_0 \left ( \frac{x - \mu_2}{\sigma_2} \right).
\]
{Then, for a sufficiently small $\epsilon_0$, when $\sigma_1 < \epsilon_0$,
we have\\
(i)  for any $x$,
\be
\label{lemma4.1}
\log f(x; G) \leq - \log \sigma_1 + \log \{g(x; G) \};
\ee

\noindent (ii)
for any $x$ satisfying
$|x - \mu_1| \geq {\sigma_1^{2/3}}$,
\be
\label{lemma4.2}
\log f(x; G) \leq \log g(x; G).
\ee
}
\end{lemma}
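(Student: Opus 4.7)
The plan is a term-by-term comparison of $f(x;G)$ with $g(x;G)$ using the two ingredients in Condition B3: the unimodality $f_0(y) \leq f_0(ay)$ for $0 < a \leq 1$, which allows a narrow kernel to be replaced by a wider one at a controlled cost, and the Pareto-type tail bound $\sup_y (1+y^2) f_0(y) \leq C < \infty$, which controls the $\alpha_1$-component when $x$ is far from $\mu_1$. Throughout, $\epsilon_0 \in (0,1]$ is chosen small and independent of $\sigma_1$.

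Part (i) should follow by direct term matching. Multiplying through by $\sigma_1$ converts the target to $\sigma_1 f(x;G) \leq g(x;G)$. The $\alpha_2$-contributions match after using $\sigma_1 \leq 1$, which gives $(\alpha_2 \sigma_1/\sigma_2) f_0((x-\mu_2)/\sigma_2) \leq (\alpha_2/\sigma_2) f_0((x-\mu_2)/\sigma_2)$. For the $\alpha_1$-term, applying B3 with $a = \sigma_1/\epsilon_0 \leq 1$ gives $f_0((x-\mu_1)/\sigma_1) \leq f_0((x-\mu_1)/\epsilon_0)$, and then $\epsilon_0 \leq 1$ yields $\alpha_1 f_0((x-\mu_1)/\sigma_1) \leq (\alpha_1/\epsilon_0) f_0((x-\mu_1)/\epsilon_0)$, matching the corresponding term of $g$. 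The positive $\delta_0$ leaves strictly positive slack, so the bound is uniform in $x$.

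For part (ii) the $\alpha_2$-terms cancel on both sides and it suffices to establish
\[
\frac{\alpha_1}{\sigma_1} f_0\!\Big(\tfrac{x-\mu_1}{\sigma_1}\Big) \;\leq\; \delta_0 + \frac{\alpha_1}{\epsilon_0} f_0\!\Big(\tfrac{x-\mu_1}{\epsilon_0}\Big)
\]
for $|x-\mu_1| \geq \sigma_1^{2/3}$. My plan is to split on the magnitude of $|x-\mu_1|$. In the \emph{outer regime} $|x-\mu_1| \geq R$ for a fixed large $R$, the tail bound yields $(\alpha_1/\sigma_1) f_0((x-\mu_1)/\sigma_1) \leq \alpha_1 C \sigma_1/(x-\mu_1)^2 \leq \alpha_1 C \sigma_1/R^2$, which is absorbed by $\delta_0$ once $\sigma_1$ is small. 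In the \emph{intermediate regime} $\sigma_1^{2/3} \leq |x-\mu_1| \leq R$, the point $(x-\mu_1)/\epsilon_0$ lies in the compact set $[-R/\epsilon_0, R/\epsilon_0]$ on which continuity and positivity of $f_0$ near its mode give a lower bound $f_0 \geq m > 0$; the term $(\alpha_1/\epsilon_0) f_0((x-\mu_1)/\epsilon_0) \geq \alpha_1 m/\epsilon_0$ on the right-hand side should then absorb the left-hand side for $\epsilon_0$ chosen small.

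The main obstacle lies in the intermediate regime of part (ii). At the boundary $|x-\mu_1| \asymp \sigma_1^{2/3}$, the polynomial tail bound alone yields only $(1/\sigma_1) f_0((x-\mu_1)/\sigma_1) = O(\sigma_1^{-1/3})$ in the worst case, whereas the lower bound $\alpha_1 m/\epsilon_0$ is only $O(1)$. Closing this gap cleanly appears to require additional tail decay beyond the literal $(1+y^2)^{-1}$ bound, a property satisfied by all the concrete examples in Section~\ref{main} (normal, logistic, extreme-value, $t_\nu$ with $\nu>2$), or a sharper interpolation between the two parts of B3. The constants $\epsilon_0$, $\delta_0$, and $R$ must be coordinated so that the final inequality holds uniformly in $\sigma_1 < \epsilon_0$.
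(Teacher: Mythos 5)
Your part (i) is correct and is exactly the paper's argument: multiply $g(x;G)$ by $\sigma_1^{-1}$, use Condition B3 with $a=\sigma_1/\epsilon_0\le 1$ to replace $f_0\big((x-\mu_1)/\sigma_1\big)$ by $f_0\big((x-\mu_1)/\epsilon_0\big)$ in the first component, and use $\sigma_1<\epsilon_0\le 1$ to absorb the factor $\sigma_1^{-1}$ in the second.

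For part (ii), your reduction to the $\alpha_1$-component and your outer regime are fine, but the obstruction you flag in the intermediate regime is genuine and is not an artifact of your particular route. The paper's own proof does not use the $\epsilon_0$-kernel term on the right at all: it establishes the single pointwise bound $\sigma_1^{-1}f_0\big((x-\mu_1)/\sigma_1\big)\le\delta_0$ on the whole region $|x-\mu_1|\ge\sigma_1^{2/3}$ by writing, with $y=(x-\mu_1)/\sigma_1$, $\sigma_1^{-1}f_0(y)=\{(1+y^2)^{3/2}f_0(y)\}\big/\{\sigma_1(1+y^2)^{3/2}\}$ and noting $\sigma_1(1+y^2)^{3/2}\ge 1$; this requires $(1+y^2)^{3/2}f_0(y)<\delta_0$ for large $|y|$, i.e.\ tail decay strictly faster than $|y|^{-3}$, which is precisely the strengthening you say is needed and which does \emph{not} follow from B3's $\sup_y(1+y^2)f_0(y)<\infty$ (the paper asserts the implication with exponent $2/3$ in the text but uses $3/2$ in the display). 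Indeed, for a kernel with exactly quadratic tails, \eqref{lemma4.2} fails at $|x-\mu_1|=\sigma_1^{2/3}$: the left side is of order $\sigma_1^{-1/3}$ while $g(x;G)$ stays bounded as $\sigma_1\to 0$. The available repairs are the ones you indicate: either strengthen the tail condition (all of the paper's examples have at least cubic decay), or shrink the exponent defining the excluded neighborhood --- for instance, on $|x-\mu_1|\ge\sigma_1^{1/3}$ one has $\sigma_1^{-1}\le|y|^{3/2}$ and hence $\sigma_1^{-1}f_0(y)\le|y|^{3/2}f_0(y)\le C|y|^{-1/2}\to 0$ using B3 alone, and the counting argument in Lemma \ref{lemA4}(a) is unaffected by this change. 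Your idea of lower-bounding the $\epsilon_0$-kernel term on a compact set cannot rescue the original statement, since that term is $O(\epsilon_0^{-1})$ uniformly while the left side diverges as $\sigma_1\to 0$; so you were right to stop there rather than claim the bound.
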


\begin{proof}
{When  $\sigma_1 < \epsilon_0$,
$f_0 \big( (x - \mu_1)/\epsilon_0\big)
\geq
f_0 \big( (x - \mu_1)/\sigma_1\big)
$
by Condition {\bf B3}.}
Hence, {for a sufficiently small $\epsilon_0$ and any $x$},
\ba
 \sigma_1^{-1} g(x; G)
&\geq &
\frac{\alpha_1 }{\sigma_1\epsilon_0}
f_0\left ( \frac{x - \mu_1}{\epsilon_0} \right )
+
\frac{\alpha_2 }{\sigma_2}
f_0\left ( \frac{x - \mu_1}{\sigma_2} \right )
\nonumber \\
&\geq&
\frac{\alpha_1 }{\sigma_1}
f_0\left ( \frac{x - \mu_1}{\sigma_1} \right )
+
\frac{\alpha_2 }{\sigma_2}
f_0\left ( \frac{x - \mu_1}{\sigma_2} \right )
=
f(x; G).
\ea
This proves \eqref{lemma4.1}.

To prove \eqref{lemma4.2}, we first notice that
{the condition $\sup_x (1 + x^2) f_0(x) < \infty$ in Condition {\bf B3}  } implies
$(1 + x^2)^{2/3} f_0(x) < \delta_0$ when $|x| > M$ {for} some large $M$.
Let $\epsilon_0 < M^{-3/2}$.
When  $|x - \mu_1| \geq \sigma_1^{2/3}$,
we have
\[
|x - \mu_1|^2 / \sigma^2_1
\geq
\sigma_1^{-2/3}
\geq
\epsilon_0^{-2/3}
> M.
\]
Therefore, when  $|x - \mu_1| \geq \sigma_1^{2/3}$,
\be
\label{eqLemA2}
\frac{1 }{\sigma_1}
f_0\left ( \frac{x - \mu_1}{\sigma_1} \right )
=
\frac{\big [ 1 + \big ( \frac{x-\mu_1}{\sigma_1} \big )^2 \big ]^{3/2}
 f \big ( \frac{x-\mu_1}{\sigma_1} \big )}
{ \sigma_1 \big [ 1 + \big ( \frac{x-\mu_1}{\sigma_1} \big )^2 \big ]^{3/2}}
\leq
\frac{\delta_0 }
{ \sigma_1 \big [ 1 + \sigma_1^{-2/3} \big ]^{3/2}}
\leq
\delta_0.
\ee
Noting the additive term $\delta_0$ in $g(x; G)$, we find
\(
\log f(x; G) \leq \log g(x; G).
\)
{This completes the proof of (\ref{eqLemA2}). }
\end{proof}

\begin{lemma}
\label{lemA3}
Uniformly over $\sigma_1 < \epsilon_0$,
$\sigma_2 > \tau_0$ for some sufficiently small
$\epsilon_0$ and $\tau_0$, there exists
a sufficiently small $\delta_0$ such that
{\be
\label{lemma5.1}
\bbE \log \{ g(X; G) / f(X; G_0) \} < 0
\ee
}
{where $g(x; G)$ is defined in Lemma \ref{lemA2}
and $\bbE$ is taken under $f(x; G_0)$.}
\end{lemma}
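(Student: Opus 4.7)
The plan is to decompose $g(x;G) = \delta_0 + h(x;G)$, where
\[
h(x;G) = (\alpha_1/\epsilon_0) f_0((x-\mu_1)/\epsilon_0) + (\alpha_2/\sigma_2) f_0((x-\mu_2)/\sigma_2)
\]
is the bona fide mixture density obtained from $G$ by replacing $\sigma_1$ with $\epsilon_0$, and then write
\[
\bbE \log\{g(X;G)/f_0(X)\}
=
\bbE \log\{h(X;G)/f_0(X)\}
+
\bbE \log\{1 + \delta_0/h(X;G)\}.
\]
The first term equals $-D(f_0 \,\|\, h(\cdot;G))$ and is nonpositive by Shannon's inequality, with strict inequality unless $h(\cdot;G) \equiv f_0$. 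Under Condition B2 (identifiability of the location-scale family) combined with the constraint $\epsilon_0 < 1$, the equality case is confined to a degenerate low-dimensional subset of the parameter region (essentially $\alpha_1 = 0$ and $(\mu_2,\sigma_2)=(0,1)$), which is excluded in the consistency context where $\alpha_1 \in [\delta,1-\delta]$ by design of the penalty $p(\alpha)$.

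The second term is positive and quantifies the cost of the floor $\delta_0$. I would bound it by partitioning the real line at level $\delta_0^{1/2}$: on $\{x : h(x;G) \geq \delta_0^{1/2}\}$ one has $\log(1+\delta_0/h) \leq \delta_0^{1/2}$ uniformly; on its complement, $g(x;G) \leq 2\delta_0^{1/2}$, so $\log\{g(X;G)/f_0(X)\} \leq \log 2 + (1/2)\log \delta_0 - \log f_0(x)$, whose integral against $f_0$ is controlled by $\bbE|\log f_0(X)|<\infty$ (implied by Condition B1) and the tail decay in Condition B3. For $\delta_0$ chosen sufficiently small, the net second term is dominated by the Shannon gap from the first term, establishing pointwise strict negativity.

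Uniformity over the non-compact parameter region $(\mu_1,\mu_2,\sigma_2) \in \bbR \times \bbR \times (\tau_0,\infty)$ is handled by compactification: as any of $|\mu_1|, |\mu_2|, \sigma_2$ diverges, the associated component of $h$ vanishes uniformly on compact $x$-sets by Condition B3, so $g$ effectively reduces to a form with fewer components; in the extreme limit $g \to \delta_0$ on compacts and $\bbE \log\{g/f_0\} \to \log \delta_0 - \bbE \log f_0(X)$, which is negative for sufficiently small $\delta_0$. Continuity of the objective in the compactified parameters, combined with the pointwise strict negativity just argued, then delivers the uniform bound. The main obstacle is the second step: quantifying the $\delta_0$-cost uniformly when $h$ is small on regions of substantial $f_0$-mass (for example when $\sigma_2 < 1$, so that the tails of $h$ are lighter than those of $f_0$). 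A careful partition argument combined with Condition B3, in the spirit of the consistency arguments of \cite{Chen2008} and \cite{Chen2017}, should close this gap.
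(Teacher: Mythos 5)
Your proposal follows essentially the same route as the paper's own proof, which is deliberately brief (the authors preface the entire consistency section by saying they give only ``a nonrigorous proof aided by intuition''). The paper's argument is your decomposition in compressed form: at $\delta_0=0$ the function $g(\cdot\,;G)$ is a genuine density, so the inequality holds by Jensen; for fixed $G$ the quantity decreases as $\delta_0\downarrow 0$, giving a $G$-specific $\delta_0$; and ``a unified $\delta_0$ is possible by going through the finite open coverage property.'' You add two things the paper omits: an explicit bound on the perturbation term $\bbE\log\{1+\delta_0/h(X;G)\}$ via a partition at level $\delta_0^{1/2}$, and a compactification argument for the noncompact region in $(\mu_1,\mu_2,\sigma_2)$. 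These are genuine improvements, but the obstacle you flag at the end --- controlling the $\delta_0$-cost uniformly where $h$ is small on sets of substantial $f_0$-mass --- is precisely the content hidden in the paper's appeal to ``finite open coverage,'' so neither argument is complete on that point; yours at least makes the gap visible.

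One concrete issue affecting both your argument and the lemma as stated: the Shannon gap $D(f_0\,\|\,h(\cdot\,;G))$ is not bounded away from zero over the stated region. At $\alpha_1=0$ and $(\mu_2,\sigma_2)=(0,1)$ one has $h=f_0$, hence $g=\delta_0+f_0>f_0$ and $\bbE\log\{g(X;G)/f(X;G_0)\}>0$ for every $\delta_0>0$; by continuity the claimed inequality also fails in a neighborhood of this point. Your fix --- restricting to $\alpha_1\in[\delta,1-\delta]$, which does hold for the EM-test iterates by design --- is the right instinct, but it is not among the hypotheses of the lemma, so you are proving a modified statement. With that extra hypothesis the gap is indeed bounded below uniformly, roughly by $-\log(1-\delta)$ up to an $O(\epsilon_0)$ correction, since the replaced first component has fixed scale $\epsilon_0$ and therefore carries weight at least $\delta$ while covering only $O(\epsilon_0)$ of the $f_0$-mass; making that bound explicit, and then completing the quantitative step you defer to the spirit of \cite{Chen2008} and \cite{Chen2017}, is what would be needed to turn the outline into a proof.
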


When $\delta_0 = 0$, $g(x; G)$ is a density function.
Hence, the inequality holds by Jensen's inequality.
For each fixed $G$, $g(x; G)$ decreases when
$\delta_0 \downarrow 0$. Hence, there exists a $G$-specific
$\delta_0$ such that \eqref{lemma5.1} holds.
A unified $\delta_0$ is possible by going through
the finite open coverage property.

\begin{lemma}
\label{lemA4}
Let  $x_1, \ldots, x_n$ be {\iid} observations from $f(x; G_0)$.
Assume the conditions of Theorem \ref{thm2}.

(a) Uniformly on $G$ over $\sigma_1 \leq \sigma_2 < \epsilon_0$
for some sufficiently small $\epsilon_0$,
as $n \to \infty$, almost surely,
\[
\tilde \ell_n(G) <  \tilde \ell_n(G_0);
\]

(b)
Uniformly on $G$ over $\sigma_1 < \epsilon_0$ and $\sigma_2 > \tau_0$
for some sufficiently small $\epsilon_0$ and $\tau_0$,
as $n \to \infty$, almost surely,
\[
\sum_{i=1}^n \log g(x_i; G) < \sum_{i=1}^n \log f(x_i; G_0) = \ell_n(G_0),
\]
and therefore
\[
\tilde \ell_n(G) <  \tilde \ell_n(G_0);
\]

(c) The above two results imply that the maximum penalized
likelihood estimator of $G$ is consistent.
\end{lemma}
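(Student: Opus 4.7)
The plan is to combine the empirical-CDF bound (Lemma \ref{lemA1}), the pointwise density bounds (Lemma \ref{lemA2}), the Kullback--Leibler-type inequality (Lemma \ref{lemA3}), and the penalty regularity in Conditions C2--C3 to rule out each non-consistent regime, and then apply a classical Wald-type argument on the remaining compact region.

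For part (a), I would partition the observations according to proximity to each support point: let $A_h = \{i : |x_i - \mu_h| < \sigma_h^{2/3}\}$ for $h = 1, 2$. On $A_1$ the crude bound $f(x_i; G) \leq 2M/\sigma_1$ (with $M = \sup f_0$) gives $\log f(x_i; G) \leq -\log \sigma_1 + O(1)$; on $A_2 \setminus A_1$ inequality \eqref{eqLemA2} forces the first component below $\delta_0$, so $\log f(x_i; G) \leq -\log \sigma_2 + O(1)$; on $(A_1 \cup A_2)^c$ both centered components are at most $\delta_0$, so $\log f(x_i; G) \leq \log(2 \delta_0)$. Lemma \ref{lemA1} gives $|A_h| \leq 2n \sigma_h^{2/3} + 10 \log n$, and since $\sigma^{2/3} |\log \sigma|$ is uniformly small on $(0, \epsilon_0)$ for small $\epsilon_0$, the peak contribution $\sum_h |A_h|(-\log \sigma_h)$ decomposes into an $O(n \epsilon_0^{2/3} |\log \epsilon_0|)$ piece plus residuals of the form $10 \log n \cdot |\log \sigma_h|$. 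Each residual is $O((\log n)^2)$ when $\sigma_h > 1/n$, and is absorbed by $p_n(\sigma_h) \leq (\log n)^2 \log \sigma_h$ from Condition C3 when $\sigma_h \leq 1/n$. For $\epsilon_0$ small enough $|A_1 \cup A_2| \leq n/2$, giving
\[
\tilde{\ell}_n(G) \leq [\,c(\epsilon_0) + (1/2) \log(2 \delta_0)\,]\,n + o(n),
\]
with $c(\epsilon_0) \to 0$. Taking $\delta_0$ and then $\epsilon_0$ small enough makes the bracket strictly less than $\bbE \log f_0(X)$; since $\tilde{\ell}_n(G_0) = n \bbE \log f_0(X) + o(n)$ almost surely by the SLLN, part (a) follows uniformly.

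For part (b), Lemma \ref{lemA3} gives $\bbE \log[g(X; G)/f(X; G_0)] \leq -2c$ uniformly over the range for some $c > 0$. A uniform strong law of large numbers for $\{\log g(\cdot; G)\}$ then yields $\sum_i \log g(x_i; G) < \ell_n(G_0) - c n$ almost surely for large $n$; the relevant class has a bounded envelope (namely $\log \delta_0 \leq \log g \leq \log(\delta_0 + M/\epsilon_0 + M/\tau_0)$), and the non-compactness in $\mu_1, \mu_2$ is tameable via the tail bound in Condition B3 together with a standard bracketing argument. Combining Lemma \ref{lemA2}(i), which gives $\ell_n(G) \leq |A_1|(-\log \sigma_1) + \sum_i \log g(x_i; G)$, with the same $A_1$-plus-penalty bookkeeping from part (a) absorbs the peak, giving $\tilde{\ell}_n(G) < \tilde{\ell}_n(G_0) - cn + o(n)$ almost surely. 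For part (c), let $\hat G$ be the penalized MLE, so $\tilde{\ell}_n(\hat G) \geq \tilde{\ell}_n(G_0)$. Parts (a) and (b) force $\min_h \hat \sigma_h \geq \epsilon_0$ almost surely for all large $n$, and analogous but easier arguments using B3 and C2 rule out $\hat \sigma_h \to \infty$ and $|\hat \mu_h| \to \infty$. On the remaining compact region, a uniform SLLN together with identifiability B2 yields $\hat G \to G_0$ almost surely via a classical Wald-type argument.

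The main obstacle is part (a): one must carefully track how the sets $A_1$ and $A_2$ overlap and how the $\delta_0$-cap on one component interacts with the $1/\sigma$-peak of the other, while simultaneously invoking Condition C3 to control the $\sigma_h \leq 1/n$ regime. The uniform SLLN in part (b) also requires some technical envelope construction to handle the non-compactness of the $\mu_h$-component of the parameter space, but this is relatively standard.
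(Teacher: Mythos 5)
Your proposal follows essentially the same route as the paper: the same partition into $\sigma_h^{2/3}$-neighborhoods $A_1, A_2$ controlled by Lemma \ref{lemA1}, the same use of the $\delta_0$-cap from Lemma \ref{lemA2} off those sets, the Wald-type argument with Lemma \ref{lemA3} for part (b), and a classical consistency argument on the remaining compact region for part (c). Your bookkeeping is in places slightly more explicit than the paper's (e.g., you correctly identify $\ell_n(G) - \sum_i \log g(x_i; G) \leq |A_1|(-\log\sigma_1)$ in part (b), where the paper's stated bound appears to be a typo, and you spell out the role of Condition C3 in the $\sigma_h \leq 1/n$ regime), but the argument is the same.
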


\begin{proof}
{We first consider (a)}.
By Lemma \ref{lemA1}, the number of observations within
a $\sigma_1^{2/3}$-neighborhood of any $\mu_1$ is no more
than
\(
n_1 =  2 n \sigma_1^{2/3} + 10 \log n.
\)
Similarly, this number for $\mu_2$ is no more than
\(
n_2 = 2 n \sigma_2^{2/3} + 10 \log n.
\)
Clearly, $n_1 + n_2 < n /2$ when both $\sigma_1, \sigma_2$
are smaller than $\epsilon_0$ and $n$ is large enough.

For any subset $A$ of $\{1, \ldots, n\}$, let
\[
\ell_n(G; A) = \sum_{i \in A} \log f(x_i; G).
\]
Let $A_h = \{i: |x_i - \mu_h| \leq  \sigma_h^{2/3}\}$ for $h=1, 2$.
Since $f(x; \btheta) \leq \sigma^{-1}$, we have
\[
\ell_n(G; A_1 \cup A_2)
\leq
{\blue -}n_1 \log (\sigma_1){\blue -} n_2 \log (\sigma_2).
\]
Taking the penalty $p_n(\sigma_1, \sigma_2)$ into consideration,
for any $\sigma_1, \sigma_2$ in the specified range,
we have
\be
\label{eqnA1}
\ell_n(G; A_1 \cup A_2) +p_n(\sigma_1, \sigma_2)
\leq
n \epsilon_1
\ee
for an arbitrarily small $\epsilon_1$
when $\epsilon_0$ is chosen sufficiently small.

At the same time, by inequality
\eqref{eqLemA2} in the proof of Lemma \ref{lemA2},
we find
\be
\label{eqnA2}
\ell_n(G; A^c_1 A^c_2)
\leq
\{n - n_1 - n_2\} \log \delta_0
\leq
(1/2) n \log \delta_0.
\ee
Combining \eqref{eqnA1} and \eqref{eqnA2}, we find
\[
\tilde \ell_n(G)
=
\ell_n(G; A_1 \cup A_2)  + \ell_n(G; A^c_1 A^c_2)  + p_n(\sigma_1, \sigma_2)+{p(\alpha_1,\alpha_2)}
\leq
n \{  \epsilon_1 +  (1/2) \log \delta_0 \}.
\]
At the same time,
\[
\tilde \ell_n(G_0)
= n \{ \bbE \log f(X; G_0)\} \{ 1 + o(1)\}.
\]
Clearly, when {$\delta_0$ is small enough}, we must have
\[
\tilde \ell_n(G) < \tilde \ell_n(G_0)
\]
almost surely. Since the inequality was obtained without considering a
specific $G$, the inequality holds uniformly
for all $G$. Hence, we have proved conclusion (a).

The first part of Conclusion (b) follows from the classical
consistency proof for the MLE by \cite{Wald1949}  and
the inequality (\ref{lemma5.1}) developed in Lemma \ref{lemA3}.
The difference between $\tilde \ell_n(G)$ and $\sum \log g(x_i; G)$
is bounded by $n_2 \log \sigma_2 + p_n(\sigma_2)$,
which is not large enough to change the direction of the inequality.
Hence, the second part of {the conclusion holds.}

Conclusions (a) and (b) imply that {the penalized
MLE} must be attained in the subspace of $G$ in which
$\epsilon_0 < \sigma_1; \tau_0 < \sigma_2$ almost surely.
The finite mixture model on this subspace  can be seen to
satisfy the conditions specified in the MLE
consistency proof of \cite{Kiefer1956}. Hence, the penalized
MLE is consistent.
\end{proof}

%
\vskip 14pt
\noindent {\large\bf Acknowledgements}

{
The work of Dr. Chen is supported by funding from Yunnan
University through One Thousand Talents,
and through funding from the Natural Science
and Engineering Research Council of Canada, RGPIN2014-03743.
Dr.~Li was supported in part by  the Natural Sciences and
Engineering Research Council of Canada, RGPIN-2015-06592.
The work of Dr.~Liu  is supported by 
a grant from the National Natural Science Foundation of China (11801359).}
\par



%

%
%
%
\end{document}